\newcommand{\eqdef}{:=}
\pgfplotsset{compat=1.14}
\newtheorem*{rep@theorem}{\rep@title}
\newcommand{\newreptheorem}[2]{%
	\newenvironment{rep#1}[1]{%
		\def\rep@title{#2 \ref{##1}}%
		\begin{rep@theorem}}%
		{\end{rep@theorem}}}
\newcommand{\norm}[1]{\left\lVert#1\right\rVert}
\renewcommand{\C}{\mathcal{C}}
\definecolor{coll}{HTML}{000090}
\renewcommand{\E}{\mathbb{E}}
\newcommand{\ie}{\textit{i.e.}}
\newcommand{\QRAM}{\mathrm{QRAM}}
\newcommand{\MNRS}{\mathrm{MNRS}}
\newcommand{\SETUP}{\mathrm{SETUP}}
\newcommand{\UPDATE}{\mathrm{UPDATE}}
\newcommand{\CHECK}{\mathrm{CHECK}}
\newcommand{\NBREP}{\mathrm{NB}_{\mathrm{REP}}}
\newcommand{\INIT}{\mathrm{INIT}}
\newcommand{\FAS}{\mathrm{FAS}}
\newcommand{\FINDALLSOLUTIONS}{\mathbf{FindAllSolutions}}
\newcommand{\Nd}{N}
\newcommand{\Sol}{\mathrm{Sol}}
\renewcommand{\COMMENT}[1]{}
\newcommand{\ket}[1]{|#1\rangle}
\newcommand{\inp}[2]{\langle{#1}|{#2}\rangle} 
\renewcommand{\inp}[2]{\langle{#1},{#2}\rangle} 
\def\01{\{0,1\}}
\def\01{\{0,1\}}
\newcommand{\eps}{\varepsilon}
\newcommand{\zo}{\{0,1\}}
\newcommand{\cv}{\vec{s}}
\newcommand{\sv}{\vec{s}}
\newcommand{\tv}{\vec{t}}
\newcommand{\vv}{\vec{v}}
\newcommand{\wv}{\vec{w}}
\newcommand{\xv}{\vec{x}}
\newcommand{\yv}{\vec{y}}
\newcommand{\hh}{\mathcal{H}}
\renewcommand{\S}{\mathbf{\mathcal{S}}}
\newcommand{\U}{\mathbf{\mathcal{U}}}
\renewcommand{\C}{\mathbf{\mathcal{C}}}
\newcommand{\V}{\mathbf{\mathcal{V}}} 
\renewcommand{\W}{\mathbf{\mathcal{W}}}
\newcommand{\scdot}{{\mkern 1.5mu\cdot\mkern 1.5mu}}
\newcommand{\cadre}[1]
{
	\begin{tabular}{|p{0.98\textwidth}|}
		\hline
		\vspace*{-0.3cm}
		#1 \\
		\hline
	\end{tabular}

}
\title{Lattice sieving via quantum random walks}
\author{ 
	Andr\'e Chailloux\inst{} \and Johanna Loyer\inst{}}
\institute{Inria de Paris, EPI COSMIQ, \\
	\email{andre.chailloux@inria.fr} {$\|$} 
	\email{johanna.loyer@inria.fr}}
\date{}
\begin{document}
\maketitle

\begin{abstract}
	Lattice-based cryptography is one of the leading proposals for post-quantum cryptography.
	The Shortest Vector Problem (SVP) is arguably the most important problem for the cryptanalysis of lattice-based cryptography, and many lattice-based schemes have security claims based on its hardness. The best quantum algorithm for the SVP is due to Laarhoven~\cite{Laa16} and runs in (heuristic) time $2^{0.2653d + o(d)}$. 
	In this article, we present an improvement over Laarhoven's result and present an algorithm that has a (heuristic) running time of $2^{0.2570 d + o(d)}$ where $d$ is the lattice dimension. We also present time-memory trade-offs where we quantify the amount of quantum memory and quantum random access memory of our algorithm. The core idea is to replace Grover's algorithm used in~\cite{Laa16} in a key part of the sieving algorithm by a quantum random walk in which we add a layer of local sensitive filtering. 
\end{abstract}

\section{Introduction}

Lattice-based cryptography is one of the most appealing modern public-key cryptography. It has worst case to average case reductions~\cite{Ajt96}, efficient schemes and allows more advanced primitives such as fully homomorphic encryption \cite{Gen09}. Another important aspect is that lattice based problems are believed to be hard even for quantum computers. Lattice-based cryptography is therefore at the forefront of post-quantum cryptography, especially in the NIST post-quantum standardization process. It is therefore very important to put a large effort on quantum cryptanalysis and to understand the quantum hardness of lattice problems in order to increase our trust in these post-quantum solutions.

For a given lattice $\mathcal{L}$, the Shortest Vector Problem (SVP) asks to find a short vector of this lattice. Solving the SVP is arguably the most important problem for the cryptanalysis of lattice-based cryptography. Additionally to its own importance, it is used as a subroutine in the BKZ algorithm, which is often the best attack on lattice-based schemes. There are two main families of algorithms for SVP: enumeration algorithms which are asymptotically slow but have small memory requirements, and sieving algorithm which have the best asymptotic complexities but have large memory requirements. For finding very small vectors, which is required by the BKZ algorithm, sieving algorithms are currently the most efficient algorithms despite their large memory requirements. Indeed, in the SVP challenge, the $10$ top performances are done by sieving algorithms and the current record solves SVP for $d = 180$\footnote{The SVP challenge can be accessed here \url{https://www.latticechallenge.org/svp-challenge}.}.

For a lattice $\mathcal{L}$ of dimension $d$, sieving algorithms solve SVP classically in time $2^{0.292d + o(d)}$ (with a heuristic analysis) using the local filtering technique introduced in \cite{BDGL16}. Laarhoven presented a quantum equivalent of this algorithm that runs in time $2^{0.265d + o(d)}$ while using as much space as in the classical setting, namely $2^{0.208d + o(d)}$. 
The BKZ algorithm is the most efficient known attack against all lattice-based schemes which were chosen at the third round of NIST standardization process\footnote{At this stage, there are $3$ encryption schemes / key encapsulation mechanisms: KYBER, NTRU and SABER as well as two signature schemes: DILITHIUM and FALCON.}. These two exponents are used for determining the number of bits of security in all these schemes hence improving the time exponent for SVP has direct implications on the security claims of these schemes. 
 
 \paragraph{Related work.}
Heuristic sieving algorithms were first introduced by Nguyen and Vidick \cite{NV08} that presented an algorithm running in time $2^{0.415d + o(d)}$ and using $2^{0.2075d}$ memory. A more efficient sieve in practice but with the same asymptotic running time was presented in \cite{MV10}. Then, there has been improvements by considering $k$-sieve algorithms \cite{WLTB11,ZPH14,Laa16}. Also, several works showed how to use nearest neighbor search to improve sieving algorithms \cite{LdW15,Laa15,BL16}. The best algorithm \cite{BDGL16} runs in time $2^{0.292d + o(d)}$ and uses locality-sensitive filtering.

In the quantum setting, quantum analogues of the main algorithms for sieving were studied \cite{LMvdP15,Laa16}. The best algorithm runs in time $2^{0.265d + o(d)}$ and is the quantum analogue of \cite{BDGL16}. There has been two more recent works on quantum sieving algorithms. First, quantum variants of the $k$-sieve were studied in \cite{KMPM19}, giving interesting time-space trade-off and a recent article \cite{AGPS20} studied more practical speedups of these quantum algorithms, {\ie} when do these gains in the exponent actually translate to quantum speedups. 

\paragraph{Contributions.}
In this article, we study and improve the asymptotic complexity of quantum sieving algorithm for SVP. This is the first improvement on the asymptotic running time of quantum sieving algorithms since the work of Laarhoven \cite{Laa16}\footnote{We are talking here only about the asymptotic running time, there are other metrics of interest that have been covered in \cite{KMPM19,AGPS20} where there were some improvements.}.

It is not \textit{a priori} clear how to use quantum random walks to adapt the algorithm from \cite{BDGL16}. This algorithm is divided into a pre-processing phase and a query phase. In this query phase, we have several points that are in a filter $F$, which means here that there are close to a specific point. We are then given a new point $\vv$ and we want to know whether there exists a point $\wv \in F$ such that $\norm{\vv \pm \wv}$ is smaller than $\min\{\norm{\vv},\norm{\wv}\}$\footnote{We remain a bit imprecise and informal here as we haven't properly described sieving algorithms yet.}. Then we do not know how to do better here than Grover's algorithm, which takes time $\sqrt{|F|}$. On the other hand, if instead of this query framework, we start from a filter $F$ and we want to find all the pairs $\vv,\wv$, then we can apply a quantum random walk. 

Even within this framework, there are many ways of constructing quantum random walks and most of them do not give speedups over \cite{Laa16}. What we show is that by adding proper additional information in the vertices of the random walk, in particular by adding another layer of filters within the vertices of the graph on which we perform the quantum walk, we can actually get some improvement over Grover's algorithm and achieve our speedups. 

We now state our results. We present here not only the running time but also the amount of classical memory, quantum memory and quantum RAM (QRAM) operations required. 

Our main theorem is an improvement of the best asymptotic quantum heuristic running time for the SVP bringing down the asymptotic running time from $2^{0.2653d + o(d)}$ to $ 2^{0.2570 d + o(d)}$. Our results are in the QRAM model where QRAM operations can be done efficiently. Notice that Laarhoven's result is in this model so our result are directly comparable to his. 

\begin{theorem}\label{Theorem:Main}
    There exists a quantum algorithm using quantum random walks that solves the SVP on dimension $d$ which 
    heuristically solves SVP on dimension $d$ in 
    time $ 2^{0.2570 d + o(d)}$, 
    uses QRAM of maximum size $ 2^{0.0767d}$,
    a quantum memory of size $2^{0.0495d}$ and a classical memory of size $\poly(d) \cdot 2^{0.2075d}$.
\end{theorem}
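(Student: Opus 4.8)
The plan is to reduce, in the standard way, the SVP to running $\poly(d)$ iterations of a single sieve step on a list $L$ of lattice vectors of norm at most some radius $R$, with $|L| = N = 2^{0.2075 d + o(d)} = (4/3)^{d/2+o(d)}$; under the usual sieving heuristic (the normalized vectors of $L$ behave like independent uniform points on the unit sphere $S^{d-1}$) each step produces enough pairs $\vv,\wv \in L$ with $\norm{\vv \pm \wv} < \min\{\norm{\vv},\norm{\wv}\}$ to replace $L$ by a list of comparably many but shorter vectors. Hence it suffices to exhibit a quantum procedure that, given $L$, outputs essentially all such reducing pairs within the claimed time, QRAM, quantum-memory and classical-memory budgets; summing over the $\poly(d)$ iterations costs only a polynomial factor, absorbed in the $o(d)$.

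For one sieve step I would first apply the locality-sensitive filtering of \cite{BDGL16}: pick a structured family of $2^{c_1 d + o(d)}$ random filters, each selecting the sublist of vectors lying in a fixed spherical cap, tuned so that every reducing pair is ``visible'' together inside roughly one filter while each filter contains only $N' = 2^{c_2 d + o(d)}$ vectors. Finding all reducing pairs then decomposes into $2^{c_1 d}$ independent sub-problems, each: given $N'$ near-parallel vectors, find all reducing pairs among them. Laarhoven's algorithm solves each sub-problem by a Grover search of cost $\widetilde{O}(N')$ per point; the new ingredient is to solve it instead by an MNRS quantum walk on the Johnson graph whose vertices are the $k$-subsets of the $N'$ filtered vectors, a vertex being marked iff it contains a reducing pair. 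The crucial design choice is to attach to each vertex a \emph{second} layer of locality-sensitive filter buckets over its $k$ vectors, built from a canonical subset-independent family of filters so that the walk operators remain unitary (history independence); this auxiliary structure makes the Check operation (does this vertex contain a reducing pair?) and the Update operation (swap one vector in or out, and repair the affected buckets) run in time subpolynomial in the sizes that dominate, which is exactly what lets the walk beat the $\sqrt{N'}$ Grover cost.

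Plugging into the MNRS bound, one sub-problem costs $\mathsf{S} + \frac{1}{\sqrt{\epsilon}}\bigl(\mathsf{C} + \frac{1}{\sqrt{\delta}}\,\mathsf{U}\bigr)$, where $\delta = \widetilde{\Omega}(1/k)$ is the spectral gap of the Johnson graph, $\epsilon$ is the heuristic probability that a random $k$-subset of the filtered list contains a reducing pair (computed from spherical-cap volumes, conditioned on membership in the outer filter), and $\mathsf{S},\mathsf{U},\mathsf{C}$ are the setup, update and check costs of the two-layer data structure. Multiplying by the number $2^{c_1 d}$ of outer filters and by $\poly(d)$ for the iterations, and imposing the constraint that the chosen filtering parameters still expose enough reducing pairs overall, yields an optimization over the outer-filter angle (equivalently over $c_1,c_2$), the inner-filter parameters, and the walk subset size $k$. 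Solving this numerically gives the minimum time exponent $0.2570$, with the walk's position and coin registers of total size $2^{0.0495 d}$ (quantum memory), one vertex's data structure of size $2^{0.0767 d}$ held in QRAM and addressed in superposition, and the full list $L$ of size $2^{0.2075 d}$ kept in classical memory; the stated figures are one point of a broader time--memory trade-off obtained by varying the same parameters.

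The main obstacle is twofold. First, making the per-vertex two-layer data structure genuinely a function of the underlying $k$-subset, so that Setup, Update and Check are reversible and the standard quantum-walk spectral bounds apply, while simultaneously keeping Update cheap: this forces the inner buckets to be defined from a fixed, subset-independent filter family and requires checking that an Update touches only a subpolynomial number of buckets. Second, carrying out the heuristic counting — one must verify that, with both filtering layers in place, the fraction of reducing pairs surviving both layers is still $1/\poly(d)$, so that $\poly(d)$ repetitions recover essentially all of them; this is where the sieving heuristic is invoked and where the balance between the parameters of the two layers is delicate. Once these points are settled, the exponent $0.2570$ and the memory figures follow from the routine but lengthy numerical optimization.
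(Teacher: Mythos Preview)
Your high-level architecture is correct and matches the paper: an outer layer of locality-sensitive filters, and inside each outer filter a quantum random walk on the Johnson graph whose vertices carry a second, inner layer of filter buckets so that Update and Check become cheap. However, as you have described it, the optimization would yield the exponent $0.2605$, not $0.2570$. The paper carries out exactly your scheme first (their ``first attempt'') and obtains $2^{0.2605d}$; the drop to $2^{0.2570d}$ comes from an additional idea you are missing.

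The gap is your constraint that ``the fraction of reducing pairs surviving both layers is still $1/\poly(d)$''. In the paper's optimum this fraction is \emph{exponentially small}. Concretely, instead of taking enough inner filters so that any reducing pair present in a vertex lands in some common inner bucket with probability $\Theta(1)$ (their parameter $\rho_0$), they introduce a free parameter $\rho\in(0,\rho_0]$ and use only $N^{\rho+c_1-c_2}$ inner filters; a vertex is declared marked only if a reducing pair is witnessed by one of these fewer buckets. This shrinks the marked fraction $\varepsilon$ by a factor $N^{\rho-\rho_0}$, but it also shrinks the Setup cost to $N^{c_1+\rho}$ and the Update cost to $N^{\max\{\rho,(\rho+c_2)/2\}}$. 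One then reruns the walk with fresh inner codes $N^{\rho_0-\rho}$ times to recover all solutions. The optimum is at $\rho\to 0$ and $c_2=0$, which balances $\S$ against $\U/\sqrt{\varepsilon\delta}$ and gives $0.2570$. With your $1/\poly(d)$ survival constraint you are forced to $\rho\approx\rho_0$, and the best you can do is $0.2605$.

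A secondary point: your attribution of the memory figures is swapped. The $2^{0.0767d}$ is not the size of a vertex's data structure; it is $N^{c}$, the number of points in one outer filter, stored \emph{classically} and accessed via QRACM during the walk. The per-vertex inner-bucket structure has size $N^{c_1+\rho}\to N^{c_1}=2^{0.0495d}$, and this is both the quantum-memory and the QRAQM requirement (since the walk holds it in superposition and performs insertions/deletions). So the two numbers are correct but their roles are the other way around.
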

We can see that additionally to improving the best asymptotic running time, this algorithm uses much less quantum resources (both quantum memory and quantum RAM) than its running time which makes it fairly practical. We also present two trade-offs: a quantum memory-time trade-off and a QRAM-time trade-off. For a fixed amount of quantum memory, our algorithm performs as follows. 
\begin{theorem}[Trade-off for fixed quantum memory] 
	There exists a quantum algorithm using quantum random walks that solves the SVP on dimension $d$ which, for a parameter $M \in [0,0.0495]$, heuristically runs in time $2^{\tau_M d + o(d)}$, uses QRAM of maximum size $2^{\gamma_M d}$ and quantum memory of size $2^{\mu_M d}$ and a classical memory of size $\poly(d)2^{0.2075d}$ where 
	$$ \tau_M \in 0.2653 - 0.1670M + [-2\scdot10^{-5} ; 4\scdot10^{-5}]$$ 
	$$\gamma_M \in 0.0578 + 0.3829M - [0 ; 2\scdot10^{-4}] \quad ; \quad \mu_M = M.$$
\end{theorem}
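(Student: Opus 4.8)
The plan is to build the claimed trade-off as a parametrized family of quantum-walk sieving algorithms, obtained by generalizing the single algorithm that will be used to prove Theorem~\ref{Theorem:Main}. Concretely, I would isolate the part of the sieve where, inside a top-level filter $F$, one must find all ``reducing'' pairs $\vv,\wv \in F$. The main algorithm does this via a Johnson-graph quantum walk whose vertices carry an additional internal layer of locality-sensitive filters; the combinatorial parameters are the size of the filter $F$ (governed by the angle of the outer filtering), the size of the vertex subsets of the walk, and the angle/number of the inner filters. Keeping the outer geometry fixed at the value that optimizes the classical memory at $2^{0.2075d}$, I would introduce a single free knob $M$ controlling how much quantum memory the walk vertices are allowed to use (the vertex register is what dominates quantum memory), and re-optimize the remaining parameters subject to that constraint.

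The key steps, in order, are as follows. First, write down the cost of the quantum walk in terms of the setup cost $\mathsf{S}$, the update cost $\mathsf{U}$, the checking cost $\mathsf{C}$, and the spectral-gap/marked-fraction quantities $\delta$ and $\eps$, using the standard $\mathsf{S} + \frac{1}{\sqrt{\eps}}\left(\frac{1}{\sqrt{\delta}}\mathsf{U} + \mathsf{C}\right)$ formula; express each of these in terms of the combinatorial parameters via the heuristic (Gaussian heuristic / random-product-code) estimates already developed for the outer sieve. Second, impose $\mu_M = M$ as the quantum-memory budget, i.e.\ set the log-size of a walk vertex (including its inner filter data structure) equal to $Md$, and substitute. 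Third, carry out the resulting constrained optimization of the time exponent over the remaining degrees of freedom; this is a low-dimensional numerical optimization, and the output is a piecewise-smooth function of $M$ which, on $[0,0.0495]$, is extremely well approximated by the affine function $0.2653 - 0.1670M$, with the stated tiny error interval $[-2\cdot 10^{-5};\,4\cdot 10^{-5}]$ accounting for the deviation from exact linearity and for rounding. Fourth, read off the QRAM usage from the same optimization — the QRAM is used for the membership/search queries into the inner filter structure — obtaining $\gamma_M \in 0.0578 + 0.3829M - [0;2\cdot 10^{-4}]$, and check the boundary case $M = 0.0495$ reproduces the numbers of Theorem~\ref{Theorem:Main} (time $0.2570$, QRAM $0.0767$), which it does since $0.2653 - 0.1670\cdot 0.0495 \approx 0.2570$ and $0.0578 + 0.3829 \cdot 0.0495 \approx 0.0768$.

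Finally, I would confirm correctness of the algorithm itself: that with high probability over the heuristic randomness the walk's marked vertices correspond exactly to vertices containing a reducing pair, that the sieve makes enough progress per iteration (the list stays of size $2^{0.2075d}$ and shrinks the vector norms), and that the polynomial overheads from the $o(d)$ terms — number of outer filters, number of sieve iterations, amplitude-amplification repetitions — are genuinely subexponential. The endpoint $M=0$ deserves a separate remark: there the vertex register is subexponential, so the ``walk'' degenerates and one recovers essentially Laarhoven's Grover-based running time $2^{0.2653d+o(d)}$, which is exactly what the formula predicts at $M=0$; this is a useful sanity check rather than a separate proof.

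The main obstacle I expect is the constrained optimization together with the honest bookkeeping of what exactly lives in quantum memory versus QRAM versus classical memory. The quantum-walk analysis has several moving parts — the inner filtering adds a nested layer whose setup, update and checking costs all interact — and getting the update cost right (an update must maintain the inner filter buckets consistently, which is where a naive construction loses the speedup, exactly the point flagged in the introduction) is delicate. Once the cost expressions are pinned down, establishing that the optimum is near-affine in $M$ over the stated range is a matter of evaluating at grid points and bounding the curvature, which is routine; the error intervals in the statement are precisely the slack needed to make this rigorous without claiming exact closed forms.
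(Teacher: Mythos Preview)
Your proposal is correct and matches the paper's approach essentially point for point: the paper proves this theorem precisely by taking the quantum-walk sieve with parameters $c, c_1, c_2, \rho$, imposing the constraint that the quantum memory (which equals $N^{c_1+\rho}$) is bounded by $2^{Md}$, numerically re-optimizing the time over the remaining parameters for each $M$, and then observing that the resulting curves $\tau_M$ and $\gamma_M$ are close to affine with the stated small error intervals. Your boundary-case checks at $M=0$ (recovering Laarhoven) and $M=0.0495$ (recovering Theorem~\ref{Theorem:Main}) are exactly the sanity checks the paper relies on as well.
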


 With this theorem, we obtain for $M = 0$ the quantum running time of Laarhoven's quantum algorithm and, for $M = 0.0495$, the result of Theorem \ref{Theorem:Main}.

We now present our second trade-off theorem where we fix the amount of QRAM.

\begin{theorem}[Trade-off for fixed QRAM]
	There exists a quantum algorithm using quantum random walks that solves SVP on dimension $d$ which for a parameter $M' \in [0,0.0767]$ heuristically runs in time $2^{\tau_{M'} d + o(d)}$, uses QRAM of maximum size $2^{\gamma_{M'} d}$, a quantum memory of size $2^{\mu_{M'} d}$ and uses a classical memory of size $\poly(d) \cdot 2^{0.2075d}$ where 
$$ \tau_{M'} \in 0.2925 - 0.4647M' - [0; 6 \scdot 10^{-4}] \quad ; \quad \gamma_{M'} = M'$$
$$\mu_{M'} \in \max\{2.6356(M'-0.0579),0\} + [0 ; 9\scdot10^{-4}].$$ 
\end{theorem}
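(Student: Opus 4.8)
The plan is to reuse the parametrised family of quantum-random-walk sieving algorithms underlying Theorem~\ref{Theorem:Main} and the fixed-quantum-memory trade-off, and to re-run its complexity optimisation with the QRAM exponent pinned to the target value $M'$. Concretely, within one sieve iteration the algorithm must, for each outer locality-sensitive filter $F$, find all pairs $\vv,\wv\in F$ with $\norm{\vv\pm\wv}$ below the reduction threshold; this is done by an $\MNRS$ quantum walk on a Johnson graph whose vertices store a subset $S\subseteq F$ together with a \emph{second} layer of locality-sensitive filters applied inside $S$. Writing $|F|=2^{fd+o(d)}$ (fixed by the outer filtering angle), $|S|=2^{sd}$ for the walk-vertex size, $\delta=2^{-sd+o(d)}$ for the Johnson-graph spectral gap, $\eps$ for the fraction of marked vertices, and keeping the inner filtering parameters symbolic, the walk cost is $\SETUP + \eps^{-1/2}\bigl(\delta^{-1/2}\,\UPDATE + \CHECK\bigr)$, and each of $\SETUP,\UPDATE,\CHECK$, together with the quantum-memory and QRAM footprints, is an explicit exponential function of these parameters. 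The number of iterations contributes the list-size factor $2^{0.2075d}$ to the classical memory and only a $2^{o(d)}$ overhead elsewhere; the QRAM footprint is exactly the space needed to hold, in superposition, the vertex data structure $S$ with its inner-filter buckets, so it carries its own exponent $\gamma$, while the quantum memory is the (smaller) workspace of a single $\UPDATE$/$\CHECK$ step, including the Grover-type search inside $\CHECK$.

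First I would write all of $\tau$ (time), $\gamma$ (QRAM), $\mu$ (quantum memory) as functions of the free parameter vector, then impose the single linear constraint $\gamma=M'$ and minimise $\tau$ over the remaining parameters subject to feasibility (filtering angles in their valid ranges, $\eps\le 1$, $|S|\le|F|$, and nonnegativity of the induced exponents). Solving this constrained optimisation numerically over a grid---exactly as for the two earlier theorems---gives, for each $M'\in[0,0.0767]$, an optimal time exponent; a linear fit of the resulting curve yields $\tau_{M'}\approx 0.2925 - 0.4647 M'$, and the residual of the fit is what the stated slack $-[0;6\scdot10^{-4}]$ records. The endpoints serve as consistency checks: at $M'=0.0767$ the optimum must coincide with Theorem~\ref{Theorem:Main}, i.e. $\tau=0.2570$, and at $M'=0$ the inner walk structure collapses and one should recover the classical \cite{BDGL16} exponent $0.2925$, since with negligible QRAM the walk offers no speedup over \cite{Laa16}. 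The threshold $M'=0.0579$ is the value at which, at the optimal configuration, the inner Grover search inside $\CHECK$ first requires genuinely exponential quantum memory; below it the optimiser selects a configuration with $2^{o(d)}$ quantum memory, which explains the form $\mu_{M'}=\max\{2.6356(M'-0.0579),0\}+[0;9\scdot10^{-4}]$.

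The hard part will not be any individual inequality but arguing that the heuristic cost formulas for the two-level-filtered quantum walk remain valid uniformly over the whole parameter range traced out as $M'$ ranges from $0$ to $0.0767$---in particular that the usual sieving heuristics (lattice points distributing across the nested filters as if uniformly random, and all reducing pairs being detected) still apply when the inner filter is very coarse (large $M'$) or almost trivial (small $M'$), and that pinning $\gamma$ does not perturb the Johnson-graph spectral gap or the estimate of $\eps$. Once those cost formulas are granted, the statement reduces to a finite numerical optimisation followed by an affine fit, with the bracketed error terms quantifying the gap between the true optimum and the reported linear approximation.
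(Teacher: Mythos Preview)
Your overall strategy---take the parametrised quantum-walk sieve, impose a single constraint on the QRAM exponent, numerically optimise the remaining parameters, then report an affine fit with residual bounds---is exactly what the paper does; there is no separate analytic argument in the paper beyond re-running the cost formulas from Sections~\ref{Section:QuantumWalk} and~\ref{Section:QuantumWalk2} under the constraint and plotting the result.

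However, your resource accounting is inverted, and this matters because it changes \emph{which} parameter gets pinned. In the paper's analysis (Section~5.3), the quantum memory is the number of qubits needed to hold the vertex data structure in superposition, namely $N^{c_1+\rho}$; the QRAM size is dominated by the QRACM calls that access the full classical list $L_y$ of size $N^{c}$ sitting in the current $\alpha$-filter (one needs superposed lookups into this list during the update). Hence fixing $\gamma_{M'}=M'$ pins the outer-filter exponent $c$, not the vertex size, and the optimisation is then over $(c_1,c_2,\rho)$ with $c$ fixed. With your identification you would instead be constraining $c_1+\rho$ and leaving $c$ free, which is precisely the \emph{other} trade-off theorem (fixed quantum memory), and the resulting numerical curves would not match the stated constants.

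Relatedly, your reading of the threshold $M'\approx 0.0579$ is off. It is not that a Grover search inside $\CHECK$ starts needing exponential workspace (in fact $\CHECK$ is $O(1)$: the marked bit is maintained during updates, and the Grover step lives inside $\UPDATE$). Rather, below this QRAM budget the optimiser drives $c_1+\rho\to 0$, so the walk degenerates to a single-element vertex and one is effectively back to Laarhoven's Grover-based $\FINDALLSOLUTIONS$; the quantum memory $N^{c_1+\rho}$ is then $2^{o(d)}$, which is why $\mu_{M'}=0$ on that segment. Once you correct the identification of QRAM with $N^{c}$ and quantum memory with $N^{c_1+\rho}$, the endpoints ($M'=0$ giving \cite{BDGL16}, $M'=0.0767$ giving Theorem~\ref{Theorem:Main}) and the breakpoint fall out exactly as stated.
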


 With this theorem, we obtain for $M' = 0$, the best classical exponent of \cite{BDGL16} (we can actually show the algorithm uses no quantum resources in this case). For $M' = 0.0577$, we retrieve Laarhoven's quantum exponent and for $M' = 0.0767$, we get Theorem \ref{Theorem:Main}.

This theorem can also be helpful if we want to optimize other performance measures. For example, it has been argued that having efficient QRAM operations is too strong and that performing a QRAM operation should require time at least $r^{1/3}$ where $r$ is the number of QRAM registers. This means we want to minimize the quantity $\lambda = \tau_{M'} + \frac{1}{3}\max\{\gamma_{M'},\mu_{M'}\}$\footnote{We want to minimize $2^{\tau_{M'}d}\cdot \left(2^{\gamma_{M'}d} + 2^{\mu_{M'}d}\right)^{1/3}$ which asymptotically is equivalent to minimizing $\tau_{M'} + \frac{1}{3}\max\{\gamma_{M'},\mu_{M'}\}$.}. We also show some mild improvements in this metric: the previous best known bound was $\lambda = 0.2849 $ \cite{Laa16,AGPS20} while using our theorem, we can retrieve the previous results by taking $M' = 0.0577$ and slightly improve it by taking ${M'} = 0.0767$ to obtain $\lambda = 0.2824$.

\paragraph{Organisation of the paper.} In section \ref{Section:QuantumPreliminaries}, we present preliminaries on Quantum computing. In Section \ref{Section:LatticePreliminaries}, we then present sieving algorithm, as well as useful statements on lattices. In Section \ref{Section:AlgorithmSkeleton}, we present the framework we use for sieving algorithm that we use and perform a first study of its time complexity. Next, we present in Section \ref{Section:QuantumWalk} the quantum walk that will allow our time improvements and in Section \ref{Section:Optimal1} the numerical values we achieve and the space-time trade-offs. We perform a final discussion in Section \ref{Section:Conclusion} and talk about parallelization of our algorithm as well as possible improvements. 
\section{Quantum computing preliminaries}\label{Section:QuantumPreliminaries}
\subsection{Quantum circuits.}
We consider here quantum circuits consisting of $1$ and $2$ qubit gate, without any locality constraint, meaning that we can apply a $2$ qubit gate from a universal set of gates to any pair of qubits in time $1$\footnote{We are only interested in asymptotic running time here so we are not interested in the choice of this universal gate set, as they are all essentially equivalent from the Solovay-Kitaev theorem (see \cite{NC00}, Appendix 3).}. We use the textbook gate model where the running time of a quantum circuit is just the number of gates used. The width of a circuit is the number of qubits it operates on, including the ancilla qubits. This quantity is important as it represents the number of qubits that have to be manipulated simultaneously and coherently. We will also call this quantity quantum memory. 

When we will know much more precisely how quantum architectures look like, it will be possible to make these models more precise and replace the gate model with something more adequate. The gate model is still the most widely used in the scientific community and is very practical to compare different algorithms. We will use the gate model as our main model for computing quantum times but we will also include other interesting quantum figures of merit, such as quantum memory or Quantum Random Access Memory usage.

\subsection{Quantum Random Access Memory.} \label{section:QRAM} Quantum Random Access Memory (denoted hereafter QRAM) is a type of quantum operation which is not captured by the circuit model. Consider $N$ registers $x_1,\dots,x_N \in \zo^d$ stored in memory. A QRAM operation consists of applying the following unitary 
$$ U_{\QRAM} : \ket{i}\ket{y} \rightarrow \ket{i}\ket{x_i \oplus y}.$$
We say that we are in the $\QRAM$ model if the above unitary can be constructed efficiently, typically in time $O(d + \log(N))$. We can distinguish two different types of $\QRAM$: QRACM, where the registers $x_1,\dots,x_N$ are stored in some classical memory and QRAQM where the unitary $U_{QRAM}$ can be applied on fully quantum registers. The former being of course easier to achieve than the latter. 

$\QRAM$ operations are theoretically allowed by the laws of quantum mechanics and there are some proposals for building efficiently $\QRAM$ operations, such as \cite{GLM08}, even though its robustness has been challenged in \cite{AGJ+15}. The truth is that it very premature to know whether $\QRAM$ operations will be efficiently available in quantum computers. This would definitely require a major hardware breakthrough but as does quantum computing in general. 

While our results are mainly in the QRAM model, we will also discuss other metrics where the cost of a QRAM operation is not logarithmic in $N$ but has a cost of $N^x$ for a constant $x$.

\subsection{Grover algorithm.}
One formulation of Grover's search problem \cite{Gro96} is the following. 
We are given a list of data $x_1, ..., x_r$, with $x_i \in {E}$. 
Given a function $f : {E} \rightarrow \{0,1\}$, the goal is to find an $i$ such that $f(x_i)=1$, and to output "no solution" if there are no such $i$. Let $\Sol = \{i \in [r]: f(x_i) = 1\}$.

Classically, we cannot solve this problem with a better average complexity than $\Theta(\frac{r}{\Sol})$ queries, which is done by examining random $x_i$ one by one until we find one whose image is $1$ through $f$.
Quantum computing allows a better complexity. Grover’s algorithm solves this search problem in $O\left(\sqrt{\frac{r}{\Sol}}\right)$ queries to $f$. Applying Grover's algorithm this way requires efficient QRAM access to the data $x_1,\dots,x_r$.


\subsection{Quantum random walks.}
We present here briefly quantum random walks (QRW). There are several variants of QRW and we will use the $\MNRS$ framework, first presented in \cite{MNRS11}.

We start from a graph $G = (V,E)$ where $V$ is the set of vertices and $E \subseteq V \times V$ is the set of edges. We do not allow self loops which means that $\forall x \in V, (x,x) \notin E$ and the graph will be undirected so $(x,y) \in E \Rightarrow (y,x) \in E$. Let also $N(x) = \{y : (x,y) \in E\}$ be the set of neighbors of $x$. We have a set $M \subseteq V$ of marked elements and the goal of a QRW is to find $v \in M$.

Let $\eps = \frac{|M|}{|V|}$ be the fraction of marked vertices and let $\delta$ be the spectral gap of $G$\footnote{For a regular graph, if $\lambda_1 > \dots > \lambda_{|V|}$ are the eigenvalues of the normalized adjacency matrix of $G$, then $\delta = \lambda_1 - \max_{i = 2 \dots n} |\lambda_i|$.}. For any vertex $x$, we define $\ket{p_x} = \sum_{y \in N(x)} \frac{1}{\sqrt{|N(x)|}} \ket{y}$. We also define $\ket{U} = \frac{1}{\sqrt{|V|}} \sum_{x \in V} \ket{x}\ket{p_x}$. We now define the following quantities:
\begin{itemize}
	\item $\SETUP$ cost $\S$: the $\SETUP$ cost $\S$ is the cost of constructing $\ket{U}$.
	\item $\UPDATE$ cost $\U$: here, it is the cost of constructing the unitary 
	$$U_{\UPDATE} : \ket{x}\ket{0} \rightarrow \ket{x}\ket{p_x}. $$
	\item $\CHECK$ cost $\C$: it is the cost of computing the function $f_{\CHECK} : V \rightarrow \zo$ where $f_{\CHECK}(v) = 1 \Leftrightarrow v \in M$. 
\end{itemize}

\begin{proposition}\cite{MNRS11} \label{qwalk}
	There exists a quantum random walk algorithm that finds a marked element $v \in M$ in time 
	$$ \S + \frac{1}{\sqrt{\eps}}\left(\frac{1}{\sqrt{\delta}}\U + \C\right).$$
\end{proposition}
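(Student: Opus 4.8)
The plan is to reproduce the $\MNRS$ argument of \cite{MNRS11}, which combines a Szegedy-style quantization of the walk on $G$ with quantum phase estimation and amplitude amplification. First I would set up the walk operator on $\mathbb{C}^{|V|}\otimes\mathbb{C}^{|V|}$: let $\mathcal{A}=\mathrm{span}\{\ket{x}\ket{p_x}:x\in V\}$ and $\mathcal{B}=\mathrm{span}\{\ket{p_y}\ket{y}:y\in V\}$, let $\mathrm{ref}_{\mathcal{A}}=2\Pi_{\mathcal{A}}-\I$ and $\mathrm{ref}_{\mathcal{B}}=2\Pi_{\mathcal{B}}-\I$ be the reflections through these subspaces, and define the walk step $W=\mathrm{ref}_{\mathcal{B}}\,\mathrm{ref}_{\mathcal{A}}$. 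Using $U_{\UPDATE}$, its inverse, a register swap, and a reflection about $\ket{0}$ on the second register, one checks that a single application of $W$ can be realized with $O(\U)$ gates, and that $\ket{U}\in\mathcal{A}\cap\mathcal{B}$ so that $\ket{U}$ is a $+1$ eigenvector of $W$.

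The spectral core of the proof is Szegedy's theorem: on the invariant subspace $\mathcal{A}+\mathcal{B}$, the non-trivial eigenvalues of $W$ are $e^{\pm 2i\theta_j}$ with $\cos\theta_j$ ranging over the eigenvalues $\lambda_j$ of the normalized adjacency matrix of $G$. The quantitative consequence I need is the \emph{phase gap}: since $|\lambda_j|\le 1-\delta$ for $j\ge 2$, every eigenvector of $W$ orthogonal to its fixed space has eigenphase at least $\Omega(\sqrt{\delta})$ away from $0$. It is precisely this $\sqrt{\delta}$ versus $\delta$ relation that produces the quadratic speedup over a classical walk.

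With these ingredients, the algorithm and its cost analysis run as follows. Step (i): prepare $\ket{U}=\frac{1}{\sqrt{|V|}}\sum_{x\in V}\ket{x}\ket{p_x}$ with the $\SETUP$ routine at cost $\S$; note that the projection of $\ket{U}$ onto the marked subspace $\mathrm{span}\{\ket{x}\ket{p_x}:x\in M\}$ has squared norm $\eps$. Step (ii): run $O(1/\sqrt{\eps})$ rounds of amplitude amplification, each round being the product of two reflections. The first is the reflection about the marked states, implemented by evaluating $f_{\CHECK}$ on the first register and phase-flipping, at cost $O(\C)$. The second is an \emph{approximate} reflection about $\ket{U}$: rather than re-running $\SETUP$, apply quantum phase estimation to $W$ with precision of order $\sqrt{\delta}$ — which costs $O(1/\sqrt{\delta})$ controlled applications of $W$, i.e.\ $O(\frac{1}{\sqrt{\delta}}\U)$ gates — flip the phase conditioned on the estimated eigenphase being $0$, and uncompute. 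By the phase gap bound this operation is close, on the relevant two-dimensional subspace, to the exact reflection $2\ketbra{U}{U}-\I$, so the round acts (up to a controlled error) as one Grover rotation by angle $\Theta(\sqrt{\eps})$ in the plane spanned by the marked and unmarked components of $\ket{U}$. After $O(1/\sqrt{\eps})$ rounds a measurement of the first register returns a marked vertex with constant probability; summing the costs gives $\S+\frac{1}{\sqrt{\eps}}\left(\frac{1}{\sqrt{\delta}}\U+\C\right)$, and $O(1)$ repetitions of the whole procedure boost the success probability to a constant.

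The step I expect to be the main obstacle is the error analysis of the approximate reflection and its accumulation over the $O(1/\sqrt{\eps})$ amplitude-amplification rounds: phase estimation only \emph{approximately} separates eigenphase $0$ from eigenphases of size $\Omega(\sqrt{\delta})$, so one must argue that the total deviation from the ideal Grover dynamics stays bounded — for instance by running the estimator at precision finer by a logarithmic factor, or by invoking the sharper analysis of \cite{MNRS11} showing that a constant number of coarse phase-estimation repetitions already suffices. Alongside this, a little bookkeeping is needed to confirm that the state stays (essentially) within the two-dimensional invariant subspace on which $\ket{U}$, the marked projection, and the fixed space of $W$ interact as described.
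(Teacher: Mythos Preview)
Your sketch is a faithful outline of the $\MNRS$ argument and identifies the right ingredients: the Szegedy walk operator built from reflections about $\mathcal{A}$ and $\mathcal{B}$, the $\Omega(\sqrt{\delta})$ phase gap, the approximate reflection about $\ket{U}$ via phase estimation on $W$, and amplitude amplification over the check oracle. The error-accumulation concern you flag is exactly the technical heart of \cite{MNRS11}, and you are right that it is handled there.

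That said, the paper does not prove this proposition at all: it is stated as a black-box citation of \cite{MNRS11} and used as input to the later cost analyses. So there is no ``paper's own proof'' to compare against; your proposal simply reproduces the original source's argument, which is the only reasonable route.
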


In order to compute the update cost, we can actually compute the classical running time of going from one vertex to another {\ie } starting from a vertex $x$ and constructing a vertex $y$ for a random neighbor $y \in N(x)$. Then, we can use this procedure in quantum superposition to construct the unitary $U_{\UPDATE}$. We refer to \cite{Amb07,MNRS11,deW19} for more details on these QRW.

\subsubsection{Quantum random walks on the Johnson graph.}
A very standard graph on which we can perform QRW is the Johnson graph $J(n,r)$. Each vertex $v$ consists of $r$ different (unordered) points $x_1,\dots,x_r \in [n]$ as well as some additional data $D(v)$ that depends on the QRW we want to perform. 

$v = (x_1,\dots,x_r,D(v))$ and $v' = (x'_1,\dots,x'_r,D(v'))$ form an edge in $J(n,r)$ iff. we can go from $(x_1,\dots,x_r)$ to $(x'_1,\dots,x'_r)$ by removing exactly one value and then adding one value. The Johnson graph $J(n,r)$ has spectral gap $\delta = \frac{n}{r(n-r)} \approx \frac{1}{r}$ when $r \ll n$~\cite{deW19}.

The additional data $D(v)$ here is used to reduce the checking time $\C$ with the drawback that it will increase the update time $\U$. Johnson graphs were often used, for example when trying to solve the element distinctness problem \cite{Amb07}, but also for the subset-sum problem \cite{BJLM13,HM18,BBSS20} or for code-based problems \cite{KT17}.

\subsubsection{Quantum data structures.}

A time analysis of quantum random walks on the Johnson graph was done in \cite{Amb07} when studying the element distinctness problem. There, Ambainis presented a quantum data structure that uses efficient QRAQM that allows in particular insertion and deletion in $O(\log(n))$ time where $n$ is the database size while maintaining this database in quantum superposition. Another paper on quantum algorithm for the subset problem using quantum random walks \cite{BJLM13} also presents a detailed analysis of a quantum data structure based on radix trees to perform efficient insertion and deletion in quantum superposition. All of these data structures require as much QRAQM registers as the number of registers to store the whole database and this running time holds only in the QRAM model. In our work, we will use such a quantum data structure and refer to the above two papers for explicit details on how to construct such quantum data structures.

\section{Lattice preliminaries}\label{Section:LatticePreliminaries}
\paragraph{Notations.}
The norm $\norm{\cdot}$ we use throughout this paper is the Euclidian norm, so for a vector $\vv = (v_1,\dots,v_d) \in \mathbb{R}^d$, $\|\vv\| = \sqrt{\sum_{i = 1}^d v_i^2}$. The inner product of $\vv = (v_1,\dots,v_d)$ and $\wv = (w_1,\dots,w_d)$ is $\langle \vv, \wv \rangle \eqdef \sum_{i = 1}^d v_i w_i$. 
The non-oriented angle between $\vv$ and $\wv$ is denoted $\theta(\vv, \wv) \eqdef \arccos\left(\frac{\inp{\vv}{\wv}}{\norm{\vv}\norm{\wv}}\right)$. 
We denote the $d$-dimensional sphere of radius $R$ by $\mathcal{S}^{d-1}_R := \{\vv \in \mathbb{R}^d : \|\vv\|=R \}$, and $S^{d-1} := S^{d-1}_1$. Throughout the paper, for a known integer $d$, we will write $N \eqdef (\sqrt{4/3})^{d}$.

\paragraph{Lattices.}
The $d$-dimensional lattice $\mathcal{L} \subset \mathbb{R}^m$ generated by the basis $B = (b_1, ... , b_n)$ with $\forall i, b_i \in \mathbb{R}^m$ is the set of all integer linear combinations of its basis vectors:
$ \mathcal{L}(B) = \Big\{ \sum_{i=1}^{d} \lambda_i ~ b_i , ~ \lambda_i \in \mathbb{Z} \Big\}$. 

\paragraph{Shortest Vector Problem.} 
Given a basis of a lattice $\mathcal{L}$, the Shortest Vector Problem (SVP) asks to find a non-zero vector in $\mathcal{L}$ of minimal norm. 
SVP is known to be NP-hard \cite{Ajt98}. This problem and its derivatives (SIS, LWE) have been used in several public-key cryptosystems, specifically as candidate for quantum-resistant cryptography \cite{dilithium, falcon, NTRU}.
Thereby, one of the most important ways to know their security and choose parameters is to estimate the computational hardness of the best SVP-solving algorithms.

\subsubsection{Sieving algorithms.}
\paragraph{SVP solving methods.}
The algorithm LLL \cite{LLL82} returns a reduced basis of a lattice in a polynomial time. However it is not sufficient to solve SVP.
All the fastest known algorithms to solve SVP run in exponential time. 
A first method is enumeration \cite{Kan83}, that solves deterministically SVP using low space but in super-exponential time in the lattice dimension $d$. 

Another method, which will interest us in this article, is lattice sieving \cite{NV08, MV10}. 
They are heuristic algorithms that probably solve SVP in time and space $2^{\Omega(d)}$. 
To this day, the best complexity for sieving in the QRAM model is obtained by quantum hypercone LSF \cite{Laa16} in $2^{0.2653d + o(d)}$ time and $2^{0.2075d + o(d)}$ space. 
Another algorithm \cite{KMPM19} uses $k$-lists to solve SVP in $2^{0.2989d + o(d)}$ time and $2^{0.1395d + o(d)}$ space. 

\paragraph{The NV-sieve.}
The NV-sieve \cite{NV08} is a heuristic algorithm. It starts with a list of lattice vectors, that we can consider of norm at most $1$ by normalization. Given this list and a constant $\gamma <1$, the NV-sieve returns a list of lattice vectors of norm at most $\gamma$. 
It iteratively builds lists of shorter lattice vectors by applying a sieve. This sieve step consists in computing all the sums (plus and minus) of two list vectors, and fills the output list with those which have norm at most $\gamma$. 
For $\gamma$ tending to $1$, two vectors form a reducing pair - {\ie} their sum is of norm at most $\gamma$ - iff. they are of angle at most $\pi/3$. 
The first list of lattice vectors can be sampled with Klein's algorithm \cite{Kle00} for example. A list size of $N^{1 + o(1)} = (\sqrt{4/3})^{d+o(d)}$ suffices to have about one reducing vector in the list for each list vector, as stated in \cite{NV08}. 
Because of the norms of the list vectors reduces with a factor by $\gamma < 1$ at each application of the algorithm, the output list will hopefully contain a non-zero shortest lattice vector after a polynomial number of application of the NV-sieve.

\paragraph{NNS and application to lattice sieving.}
A logic improvement of this algorithm is to use Neighbor Nearest Search (NNS) \cite{IM98} techniques. 
The NNS problem is: given a list $L$ of vectors, preprocess $L$ such that one can efficiently find the nearest vector in $L$ to a target vector given later. 
Used in the NV-sieve, the preprocessing partitions the input list in several buckets of lattice points, each bucket being associated with a hash function. The algorithm will only sum vectors from a same bucket, which are near to each other, instead of trying all pairs of vectors. 

\paragraph{Locality-sensitive hashing (LSH).}
A method to solve NNS is locality-sensitive hashing (LSH) \cite{IM98}. An LSH function is a hash-function that have high probability to collide for two elements if they are close, and a low one if they are far. 
Several categories of LSH functions exists: hyperplane LSH \cite{Cha02}, hypercone or spherical LSH \cite{AINR14, AR15} and cross-polytope LSH \cite{TT07}.

\subsubsection{Locality-sensitive filtering (LSF).}
\paragraph{Locality-sensitive filtering (LSF).}
More recently, \cite{BDGL16} improved NNS solving by introducing locality-sensitive filtering (LSF). 
LSF functions, called filters, map a vector $\vv$ to a boolean value: $1$ if $\vv$ survives the filter, and $0$ otherwise. They act similarly to LSH but only few vectors survive the filter. 

These filters are instantiated by hypercone filters, characterized by a vector $\sv$ and an angle $\alpha \in [0,\pi/2]$. 
For a filter $f$ of center $\sv$ and angle $\alpha$, the vector $\vv$ survives $f$ iff $\theta(\vv, \sv) \leqslant \alpha$. 
In this case, the filter $f$ is said relevant for $\vv$. The set of the vectors from a list that survives a filter $f$ is called its bucket, and is denoted $f_\alpha(\sv)$. More formally, we define the spherical cap of center $\vv$ and angle $\alpha$ as follows:
$$\mathcal{H}_{\vec{v}, \alpha} := \{\vec{x} \in \mathcal{S}^{d-1} ~|~ \theta(\vec{x}, \vec{v}) \leqslant \alpha \}$$
and $\vv$ survives the filter $f_\alpha(\sv)$ iff. $\vv \in \mathcal{H}_{{\sv}, \alpha}$.

\begin{proposition}\cite{MV10}\label{volume_cap}
	For an angle $\alpha \in [0, \pi/2]$ and $\vv \in \mathcal{S}^{d-1}$, the ratio of the volume of a spherical cap $\mathcal{H}_{\vv, \alpha}$ to the volume of the sphere $\mathcal{S}^{d-1}$ is
	$$\V_d(\alpha) := \poly(d) \cdot \sin^d(\alpha).$$
\end{proposition}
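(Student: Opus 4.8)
The plan is to reduce the $(d-1)$-dimensional volume ratio to a one-variable integral and then estimate that integral by a Laplace-type argument. By rotational invariance of $\mathcal{S}^{d-1}$ we may take $\vv = e_d$, and parametrise a point $\xv \in \mathcal{S}^{d-1}$ by its polar angle $\phi = \theta(\xv,\vv) \in [0,\pi]$ together with a direction $\omega \in \mathcal{S}^{d-2} \subset e_d^{\perp}$, writing $\xv = \cos\phi \scdot e_d + \sin\phi \scdot \omega$. The surface measure then factorises as $dS_{d-1}(\xv) = \sin^{d-2}\phi \, d\phi \, dS_{d-2}(\omega)$: the $\phi$-direction contributes a unit tangent vector, while the $\omega$-block is a $(d-2)$-dimensional tangent space scaled by $\sin\phi$. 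Since $\mathcal{H}_{\vv,\alpha} = \{\xv : \phi \le \alpha\}$, the $\mathcal{S}^{d-2}$ factor cancels between the cap and the whole sphere, leaving
$$\V_d(\alpha) \;=\; \frac{\int_0^{\alpha} \sin^{d-2}\phi \, d\phi}{\int_0^{\pi} \sin^{d-2}\phi \, d\phi}.$$

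Next I would bound the two integrals separately. The denominator is a Wallis integral, $\int_0^\pi \sin^{d-2}\phi\,d\phi = \sqrt{\pi}\,\Gamma(\tfrac{d-1}{2})/\Gamma(\tfrac{d}{2})$, which by Stirling equals $\Theta(d^{-1/2})$; in particular it lies between $1/\poly(d)$ and $\poly(d)$, so it only costs a polynomial factor. For the numerator, since $\alpha \le \pi/2$ and $\phi \mapsto \sin\phi$ is increasing on $[0,\pi/2]$, the integrand is maximised at $\phi = \alpha$, which gives the easy upper bound $\int_0^\alpha \sin^{d-2}\phi\,d\phi \le \alpha \sin^{d-2}\alpha \le \tfrac{\pi}{2}\sin^{d-2}\alpha$. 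For a matching lower bound I would substitute $t = \sin\phi$, a bijection $[0,\alpha] \to [0,\sin\alpha]$ with $d\phi = dt/\sqrt{1-t^2}$, so that $\int_0^\alpha \sin^{d-2}\phi\,d\phi = \int_0^{\sin\alpha} t^{d-2}(1-t^2)^{-1/2}\,dt \ge \int_0^{\sin\alpha} t^{d-2}\,dt = \sin^{d-1}\alpha/(d-1)$.

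Combining the three estimates, $\V_d(\alpha)$ is sandwiched between $\sin^{d-1}\alpha/\poly(d)$ and $\poly(d)\scdot\sin^{d-2}\alpha$. Since $\sin\alpha \le 1$, the lower bound already gives $\V_d(\alpha) \ge \sin^{d}\alpha/\poly(d)$ for every $\alpha \in (0,\pi/2]$; for the upper bound one absorbs the factor $\sin^{-2}\alpha$ into $\poly(d)$, which is legitimate in the regime relevant to the paper where $\alpha$ is bounded away from $0$ (and at $\alpha = \pi/2$ the ratio is exactly $1/2$, which is $\poly(d)\sin^d\alpha$). This yields $\V_d(\alpha) = \poly(d)\scdot \sin^d(\alpha)$, as claimed.

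I expect the main obstacle to be the lower bound on the numerator: the upper bound and the denominator estimate are immediate, whereas one must argue that the mass of $\int_0^\alpha \sin^{d-2}\phi\,d\phi$ concentrates within a polynomial-width window of the peak $\phi = \alpha$, rather than being exponentially smaller than $\sin^{d-2}\alpha$. The substitution $t = \sin\phi$ packages this Laplace estimate in one line, but a little care is needed near $\alpha = \pi/2$, where the Jacobian $(1-t^2)^{-1/2}$ is singular; there the monotonicity upper bound still applies and the ratio equals $1/2$, so nothing goes wrong.
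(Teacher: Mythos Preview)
The paper does not actually prove this proposition: it is stated with a citation to \cite{MV10} and used as a black box, so there is no ``paper's own proof'' to compare against. Your argument is the standard one (polar-angle parametrisation, Wallis estimate for the denominator, monotonicity upper bound and the $t=\sin\phi$ substitution for the lower bound) and is correct; the only caveat you already flag yourself, namely that absorbing $\sin^{-2}\alpha$ into $\poly(d)$ requires $\alpha$ bounded away from $0$, is harmless here since the paper only ever uses $\alpha \in [\pi/3,\pi/2]$.
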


\begin{proposition}\cite{BDGL16}
	For an angle $\alpha \in [0,\pi/2]$ and two vectors $\vv, \wv \in \mathcal{S}^{d-1}$ such that $\theta(\vv, \wv) = \theta$, the ratio of the volume of a wedge $\mathcal{H}_{\vv, \alpha} \cap \mathcal{H}_{\wv, \alpha}$ to the volume of the sphere $\mathcal{S}^{d-1}$ is
	
	$$\W_d(\alpha, \theta) := \poly(d) \cdot \Bigg( 1- \frac{2 \cos^2(\alpha)}{1 + \cos(\theta)} \Bigg)^{d/2}.$$
\end{proposition}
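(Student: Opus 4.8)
The plan is to compute the wedge volume by reducing it to a two-dimensional integral and then applying Laplace's method. By rotational invariance of the uniform measure on $\mathcal{S}^{d-1}$, we may choose orthonormal coordinates so that $\vv = (1,0,\dots,0)$ and $\wv = (\cos\theta,\sin\theta,0,\dots,0)$. For a unit vector $\xv = (x_1,\dots,x_d)$ one has $\theta(\xv,\vv)\le\alpha \iff \inp{\xv}{\vv} = x_1 \ge \cos\alpha$ and likewise $\theta(\xv,\wv)\le\alpha \iff x_1\cos\theta + x_2\sin\theta \ge \cos\alpha$, so membership in the wedge $\mathcal{H}_{\vv,\alpha}\cap\mathcal{H}_{\wv,\alpha}$ depends only on $(x_1,x_2)$. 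Pushing the uniform probability measure on $\mathcal{S}^{d-1}$ forward to the first two coordinates yields a measure on the closed unit disk with density proportional to $(1-x_1^2-x_2^2)^{(d-4)/2}$ (obtained by integrating out the last $d-2$ coordinates, which lie on a sphere of radius $\sqrt{1-x_1^2-x_2^2}$). Hence
\begin{equation*}
\W_d(\alpha,\theta) = \poly(d)\cdot\int_{R}\left(1-x_1^2-x_2^2\right)^{(d-4)/2}\,dx_1\,dx_2,
\end{equation*}
where $R = \{(x_1,x_2) : x_1 \ge \cos\alpha,\ x_1\cos\theta + x_2\sin\theta \ge \cos\alpha\}$ intersected with the unit disk, i.e. the unit disk intersected with two half-planes.

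Next I would apply Laplace's method to this integral. Since the integrand is a strictly decreasing function of $x_1^2+x_2^2$, its maximum over $R$ is attained at the point $P^\ast$ of $R$ closest to the origin, and the integral equals this maximum up to a $\poly(d)$ factor: the upper bound is the maximum times the area of the disk, and for the matching lower bound one checks that the integrand stays within a constant factor of its maximum on the intersection of $R$ with a ball of radius $\Theta(1/d)$ about $P^\ast$, a set of area $\poly(d)^{-1}$. Therefore
\begin{equation*}
\W_d(\alpha,\theta) = \poly(d)\cdot\left(1 - \norm{P^\ast}^2\right)^{(d-4)/2} = \poly(d)\cdot\left(1 - \norm{P^\ast}^2\right)^{d/2},
\end{equation*}
the last equality absorbing the bounded factor $(1-\norm{P^\ast}^2)^{-2}$ into the polynomial.

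It remains to locate $P^\ast$. The region $R$ is convex and does not contain the origin (as $\cos\alpha > 0$ for $\alpha<\pi/2$), so $P^\ast$ lies on $\partial R$; moreover the orthogonal projections of the origin onto the two bounding lines, namely $(\cos\alpha,0)$ and $(\cos\alpha\cos\theta,\cos\alpha\sin\theta)$, each violate the other constraint whenever $\theta>0$, so $P^\ast$ must be the corner of $R$. Solving $x_1=\cos\alpha$ together with $x_1\cos\theta+x_2\sin\theta=\cos\alpha$ gives $P^\ast = \bigl(\cos\alpha,\ \cos\alpha\tan(\theta/2)\bigr)$, hence
\begin{equation*}
\norm{P^\ast}^2 = \cos^2\alpha\left(1 + \tan^2(\theta/2)\right) = \frac{\cos^2\alpha}{\cos^2(\theta/2)} = \frac{2\cos^2\alpha}{1+\cos\theta}.
\end{equation*}
Substituting into the displayed expression for $\W_d(\alpha,\theta)$ yields the claimed formula. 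This uses $\norm{P^\ast}\le 1$, i.e. $\alpha \ge \theta/2$, which is precisely the condition for the wedge to be non-degenerate; when $\alpha < \theta/2$ the set $R$ misses the unit disk, the wedge is empty, and the formula is read as $0$. As a sanity check, $\theta=0$ recovers $\W_d(\alpha,0)=\poly(d)\sin^d\alpha = \V_d(\alpha)$.

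The step I expect to be the most delicate is the control of the $\poly(d)$ factor in Laplace's method: one must verify that arbitrarily close to the corner $P^\ast$ the feasible set $R$ still occupies a fixed fraction (depending only on $\theta$) of a small ball and that the integrand drops there by at most a $\poly(d)$ factor, so that the corner genuinely dominates the integral and not merely bounds it from below. The remaining ingredients — the coordinate reduction, the marginal density formula, and the elementary convex optimization for $P^\ast$ — are routine.
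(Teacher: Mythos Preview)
The paper does not give its own proof of this proposition; it is simply quoted from \cite{BDGL16}. Your argument is correct and is essentially the standard one used there: project to the plane spanned by $\vv$ and $\wv$, reduce the wedge to a planar region against the marginal density $(1-x_1^2-x_2^2)^{(d-4)/2}$, and apply Laplace's method at the corner $P^\ast$. The computation of $\norm{P^\ast}^2 = 2\cos^2\alpha/(1+\cos\theta)$ and the non-degeneracy condition $\alpha\ge\theta/2$ are both right, and the $\poly(d)$ control you flag is exactly the routine point handled by the local-cone area argument you sketch (the feasible cone at $P^\ast$ has opening angle depending only on $\theta$, and on a ball of radius $\Theta(1/d)$ the integrand stays within a constant factor of its maximum since both edge directions have strictly positive inner product with $P^\ast$).
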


\paragraph{Random product codes (RPC).}
LSF method from \cite{BDGL16} uses a decoding oracle that returns relevant filters for a given vector. 
A random-product code (RPC) that admits a fast list-decoding algorithm is sampled over the sphere $\S^{d-1}$. 
The filters are determined by its code words, and the oracle is its decoding algorithm. 

We assume $d=m \cdot b$, for $m=O(\polylog (d))$ and a block size $b$. The vectors in $\mathbb{R}^d$ will be identified with tuples of $m$ vectors in $\mathbb{R}^b$. 
A random product code $C$ of parameters $[d,m,B]$ on subsets of $\mathbb{R}^d$ and of size $B^m$ is defined as a code of the form
$C = Q \cdot (C_1 \times C_2 \times \cdots C_m),$ 
where $Q$ is a uniformly random rotation over $\mathbb{R}^d$ and the subcodes $C_1, ..., C_m$ are sets of $B$ vectors, sampled uniformly and independently random over the sphere $\sqrt{1/m} \cdot \S^{b-1}$, so that codewords are points of the sphere $S^{d-1}$. We can have a full description of $C$ by storing $mB$ points corresponding to the codewords of $C_1,\dots,C_m$ and by storing the rotation $Q$. When the context is clear, $C$ will correspond to the description of the code or to the set of codewords. Random product codes can be easily decoded in some parameter range:

\begin{proposition}[\cite{BDGL16}]\label{Proposition:Decoding}
Let $C$ be a random product code of parameters $[d,m,B]$ with $m = \log(d)$ and $B^m = N^{O(1)}$. For any $\vv \in S^{d-1}$ and $\alpha \in [0,\pi/2]$, one can compute $\mathcal{H}_{\vv,\alpha} \cap C $ in time $N^{o(1)} \scdot |\mathcal{H}_{\vv,\alpha} \cap C|$. 
\end{proposition}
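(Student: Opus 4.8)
\noindent\textbf{Proof sketch of Proposition~\ref{Proposition:Decoding}.}
The plan is to decode one block at a time along the product structure. Write $C=Q\cdot(C_1\times\cdots\times C_m)$, set $\widetilde{\vv}:=Q^{-1}\vv$ and split it into its $m$ blocks $\widetilde{\vv}^{(1)},\dots,\widetilde{\vv}^{(m)}\in\mathbb{R}^b$; since $Q$ is a rotation, $Q\cdot(c_1,\dots,c_m)\in\mathcal{H}_{\vv,\alpha}$ iff $\sum_{i=1}^m\langle\widetilde{\vv}^{(i)},c_i\rangle\ge\cos\alpha$. View the $B^m$ codewords as the leaves of a depth-$m$ tree whose level-$k$ nodes are the prefixes $(c_1,\dots,c_k)\in C_1\times\cdots\times C_k$. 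The decoder explores, level by level, only the prefixes whose partial inner product $\sigma_k:=\sum_{i\le k}\langle\widetilde{\vv}^{(i)},c_i\rangle$ meets a monotone threshold schedule $t_0\le\cdots\le t_m=\cos\alpha$, the natural choice being the proportional one $t_k=\tfrac{k}{m}\cos\alpha$: keeping a list $L_k$ of surviving prefixes, it forms $L_{k+1}$ by running through all $B$ codewords of $C_{k+1}$ for each prefix in $L_k$ and retaining those with $\sigma_{k+1}\ge t_{k+1}$; the output is $L_m$. Each prefix is produced once, so the running time is $\poly(d)\cdot m\cdot B\cdot\max_k|L_k|$; since $m=\log d$ and $B^m=N^{O(1)}$ force $B=N^{O(1)/\log d}=N^{o(1)}$ and $m=N^{o(1)}$, it remains to (i) ensure completeness and (ii) bound $\max_k|L_k|$ by $N^{o(1)}\cdot(1+|\mathcal{H}_{\vv,\alpha}\cap C|)$.

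For (i), a codeword of the cap whose partial sums dip below $\tfrac{k}{m}\cos\alpha$ at some intermediate level would be pruned, so I would run the decoder $m$ times, the $j$-th run processing the blocks in the cyclic order $j,j+1,\dots,m,1,\dots,j-1$, and return the union of the outputs. Apply this to $x_i:=\langle\widetilde{\vv}^{(i)},c_i\rangle$ for a codeword with $S:=\sum_i x_i\ge\cos\alpha$: choosing $j$ to minimise the centred partial sum $\sum_{i\le\ell}x_i-\tfrac{\ell}{m}S$, the rotation starting right after position $j$ has all its length-$\ell$ partial sums at least $\tfrac{\ell}{m}S\ge\tfrac{\ell}{m}\cos\alpha=t_\ell$, so this codeword survives that run; hence the union recovers all of $\mathcal{H}_{\vv,\alpha}\cap C$. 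Since the $m$ subcodes (and, $\widetilde{\vv}$ being uniform, the $m$ blocks) are exchangeable, all runs behave identically in distribution, so the overhead is only $N^{o(1)}$.

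For (ii) I would bound $\E|L_k|$ over the choice of $C$. As $Q$ is Haar random, $\widetilde{\vv}$ is a uniform unit vector; conditioning on the high-probability event that $\|\widetilde{\vv}^{(i)}\|^2=\tfrac1m(1\pm o(1))$ for every $i$ (a Dirichlet-type concentration bound), each $\langle\widetilde{\vv}^{(i)},c_i\rangle$ with $c_i$ uniform on $\sqrt{1/m}\cdot S^{b-1}$ behaves like $\tfrac1m\cos\theta_i$, and the fraction of $C_i$ at angle about $\theta_i$ from $\widetilde{\vv}^{(i)}$ is $\V_b(\theta_i)=\poly(b)\sin^b(\theta_i)$ by Proposition~\ref{volume_cap} in dimension $b$. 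A standard Laplace/large-deviation estimate then gives, up to $N^{o(1)}$ factors,
\[
\E|L_k| \;=\; N^{o(1)}\cdot B^{k}\cdot\max\Big\{\, \textstyle\prod_{i=1}^{k}\V_b(\theta_i)\ :\ \tfrac1k\textstyle\sum_{i=1}^{k}\cos\theta_i\ge\cos\alpha \,\Big\},
\]
and a one-line Lagrange-multiplier argument — the map $\theta\mapsto\cos\theta/\sin^2\theta$ is injective on $(0,\pi/2]$, so all $\theta_i$ are equal at the optimum — shows the maximiser is the balanced point $\theta_1=\cdots=\theta_k=\alpha$, whence $\E|L_k|=N^{o(1)}\cdot\big(B\cdot\V_d(\alpha)^{1/m}\big)^{k}$, using $\V_b(\alpha)^m=N^{o(1)}\,\V_d(\alpha)$. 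This is (up to $N^{o(1)}$) geometric in $k$, so its maximum over $k\in\{0,\dots,m\}$ is attained at an endpoint and equals either $N^{o(1)}$ or $N^{o(1)}\cdot B^{m}\V_d(\alpha)=N^{o(1)}\cdot\E|\mathcal{H}_{\vv,\alpha}\cap C|$; in either case $\max_k\E|L_k|\le N^{o(1)}\cdot(1+\E|\mathcal{H}_{\vv,\alpha}\cap C|)$. These expectation bounds suffice for the heuristic running-time analysis, and hold with high probability over the code by a standard concentration argument, as in~\cite{BDGL16}.

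The main obstacle is precisely step (ii): one must show that replacing the $m$ per-block angular constraints $\theta(\widetilde{\vv}^{(i)},c_i)\le\alpha$ by the single summed constraint $\sum_i\langle\widetilde{\vv}^{(i)},c_i\rangle\ge\cos\alpha$ — which is exactly what makes the prefix enumeration both sound and complete — does not blow up the intermediate lists beyond the final output size by more than a subexponential factor. The balanced-optimum computation above is the heart of this and of~\cite{BDGL16}; everything else is routine, as soundness of a single run is immediate from the monotone thresholds with $t_m=\cos\alpha$, the completeness fix is the elementary rotation argument, and no fast or quantum data structure is needed since each per-block list already has size $B=N^{o(1)}$.
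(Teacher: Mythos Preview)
The paper does not prove this proposition; it is quoted from \cite{BDGL16} and used as a black box, so there is no in-paper argument to compare against. Your sketch is correct and faithfully reconstructs the BDGL list-decoding idea: exploit the product structure to enumerate prefixes level by level with monotone thresholds, and show via a concavity argument (equivalently, your Lagrange step: with $u_i=\cos\theta_i$ the objective $\sum_i\log(1-u_i^2)$ is concave, so the constrained maximum is at the balanced point $\theta_i=\alpha$) that $\E|L_k|$ is, up to $N^{o(1)}$, geometric in $k$ and hence dominated by an endpoint, namely $1$ or $\E|\mathcal{H}_{\vv,\alpha}\cap C|$.

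Two minor remarks. First, your cyclic-rotation device for completeness is a neat elementary variant; a more direct route, and the one closer to \cite{BDGL16}, is to prune a prefix only when its partial sum plus the maximum achievable tail $\sum_{i>k}\max_{c\in C_i}\langle\widetilde{\vv}^{(i)},c\rangle$ drops below $\cos\alpha$, which is automatically complete and gives the same $|L_k|$ bound without the extra $m$ runs. Second, be slightly more explicit at the very end: the proposition asserts an actual running time, not an expected one, so after bounding $\E|L_k|$ you should invoke concentration (each $|L_k|$ is a sum of bounded, nearly independent contributions over the random subcodes) and a union bound over the $m$ levels and the $N^{O(1)}$ queries made in the sieve. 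This is routine and is in any case absorbed by the paper's heuristic that code points behave like i.i.d.\ uniform sphere points, but it is the one step you only gestured at.
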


\noindent We can now present the NV-sieve with LSF.

\paragraph{The NV-sieve with LSF.}
Let $\alpha \in [\pi/3, \pi/2]$ be an angle. 
The NV-sieve with LSF \cite{BDGL16} takes as input a list of lattice vectors lying on $\mathcal{S}^{d-1}$ and a constant $\gamma < 1$. 
This algorithm runs in two phases. 
First, during the processing, it samples a random-product code $C$ on the sphere, whose words give the $\alpha$-filters. 
The algorithm decodes half of the vectors of the list to get their nearest $\alpha$-filters, and then add the vectors to the buckets associated to their $\alpha$-filters. 

Secondly, there is the queries phase. For each vector $\vv$ from the other half of the list, the algorithm computes the $\alpha$-filters of $\vv$, and for each vector $\wv$ having a common $\alpha$-filter with $\vv$, the algorithm checks whereas $\|\vv - \wv\| \leqslant \gamma$. 
If it is the case, $\vv - \wv$ is added to the output list. 
This algorithm solves SVP time\footnote{We consider here the values of the NV-sieve, which have asymptotically better space requirements even though this sieve is less efficient in practice.} $2^{0.292d +o(d)}$, and in space $2^{0.208d + o(d)}$ with its space-efficient version \cite{Laa16}. Applying a Grover search instead of testing each candidate in the filter gives the quantum NV-sieve with LSF \cite{Laa16}, which run in same space and in time $2^{0.265d + o(d)}$.

\subsubsection{Probabilistic argument.}

If we consider any vector $\wv \in \S^{d-1}$ and $N^{\rho_0}$ random points $\cv_1,\dots,\cv_{N^{\rho_0}}$ in $\mathcal{H}_{\vv,\alpha}$ for $\rho_0 := \frac{\V_d(\alpha)}{\W_d(\alpha,\theta)}$ ; then this proposition implies that there exists, with constant probability, an $i \in [N^{\rho_0}]$ such that $\cv_i \in \mathcal{H}_{\wv,\alpha}$. 

	Consider a set $S = \sv_1,\dots,\sv_M$ points taken from the uniform distribution on the sphere $S^{d-1}$ and $\vv$ another point randomly chosen on the sphere. Fix also an angle $\alpha \in (0,\pi/2)$. We have the following statements:
	
	\begin{proposition}
		$\forall i \in [M], \ \Pr[\vv \in \hh_{\sv_i,\alpha}] = \Pr[\sv_i \in \hh_{\vv,\alpha}]  = \V_d(\alpha)$. \end{proposition}
	\begin{proof} Immediate by definition of $\V_d(\alpha)$ considering that both $\vv$ and $\sv_i$ are uniform random points on the sphere. \end{proof}
	From the above proposition, we immediately have that $\E[|S \cap \hh_{\vv,\alpha}|] = M\V_d(\alpha)$. We now present a standard concentration bound for this quantity.
\COMMENT{\begin{proposition}\label{Proposition:Chernoff1}
	Assume we have $M\V_d(\alpha)  = 1$. Then 
	$$\Pr[|S \cap \hh_{\vv,\alpha}| \ge d^2 + 1] = e^{-\frac{d^4}{2+d^2}}.$$
\end{proposition}
\begin{proof}
	Let $X_i$ be the random variable which is equal to $1$ if $\sv_i \in \hh_{\vv,\alpha}$ and is equal to $0$ otherwise. We have that each $X_i$ is equal to $1$ with probability $\V_d(\alpha)$. Let $Y = \sum_{i = 1}^M X_i$ so $\E[Y] = 1$. $Y$ is equal to the quantity $|S \cap \hh_{\vv,\alpha}|$.  A direct application of the multiplicative Chernoff bound gives 
	$$ \Pr[Y \ge d^2 + 1] \le e^{-\frac{d^4}{2+d^2}}$$
	which is the desired result. 
\end{proof}}
\begin{proposition}\label{Proposition:Chernoff2}
	Assume we have $M\V_d(\alpha)  = N^x$ with $x > 0$ an absolute constant. Then 
	$$\Pr[|S \cap \hh_{\vv,\alpha}| \ge 2N^x] \le e^{-\frac{N^x}{3}}.$$
\end{proposition}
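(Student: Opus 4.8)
The statement is a textbook multiplicative Chernoff bound, so the plan is to set it up in the standard way and invoke the bound with deviation parameter $\delta = 1$. Concretely, for each $i \in [M]$ let $X_i$ be the indicator random variable of the event $\sv_i \in \hh_{\vv,\alpha}$. By the preceding proposition, $\Pr[X_i = 1] = \V_d(\alpha)$, and since the $\sv_i$ are drawn independently and uniformly on $S^{d-1}$, the $X_i$ are mutually independent. Writing $Y = \sum_{i=1}^M X_i = |S \cap \hh_{\vv,\alpha}|$, linearity of expectation gives $\E[Y] = M\V_d(\alpha) = N^x \eqdef \mu$.

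Next I would apply the multiplicative Chernoff bound in the form $\Pr[Y \ge (1+\delta)\mu] \le \exp\!\big(-\tfrac{\delta^2 \mu}{2+\delta}\big)$, valid for any $\delta > 0$ and any sum of independent $\{0,1\}$-valued random variables. Taking $\delta = 1$ yields $\Pr[Y \ge 2\mu] \le \exp(-\mu/3)$, which is exactly $\Pr[|S \cap \hh_{\vv,\alpha}| \ge 2N^x] \le e^{-N^x/3}$ after substituting $\mu = N^x$. That completes the argument.

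There is essentially no obstacle here: the only things to be careful about are (i) making explicit that the sampling model really does give \emph{independent} indicator variables, so that the Chernoff bound applies, and (ii) stating which precise version of the multiplicative tail bound is being used, since constants in the exponent vary between formulations. The commented-out Proposition just above (the $M\V_d(\alpha) = 1$ case) is handled by the identical recipe with a larger multiplicative slack, so the two results share a proof template; if desired one could state a single lemma parameterized by the slack and derive both as corollaries.
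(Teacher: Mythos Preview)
Your proof is correct and follows exactly the same approach as the paper: define the indicator variables $X_i$, set $Y=\sum_i X_i$ with $\E[Y]=N^x$, and apply the multiplicative Chernoff bound with $\delta=1$. If anything, you are slightly more explicit than the paper in naming the independence assumption and the precise form of the bound.
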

\begin{proof}
	As before, let $X_i$ be the random variable which is equal to $1$ if $\sv_i \in \hh_{\vv,\alpha}$ and is equal to $0$ otherwise. Let $Y = \sum_{i = 1}^M X_i$ so $\E[Y] = N^x$. $Y$ is equal to the quantity $|S \cap \hh_{\vv,\alpha}|$.  A direct application of the multiplicative Chernoff bound gives 
	$$ \Pr[Y \ge 2N^x] \le e^{-\frac{N^x}{3}}$$
	which is the desired result. 
\end{proof}


\section{General framework for sieving algorithms using LSF}\label{Section:AlgorithmSkeleton}
We present here a general framework for sieving algorithms using LSF. We present here one sieving step where we start from a list $L$ of $N' = N^{1 + o(1)}$ lattice vectors of norm $1$ and output $N'$ lattice vectors of norm $\gamma < 1$. Sieving algorithms for SVP then consists of applying this subroutine $\poly(d)$ times (where we renormalize the vectors at each step) to find at the end a small vector. We can actually take $\gamma$ very close to $1$ at each iteration, and we refer for example to $\cite{NV08}$ for more details. This framework will encompass the best classical and quantum sieving algorithms.
\begin{algorithm}[H]
	\caption{Sieving algorithms using LSF with parameter $c$} \label{algomain0}
	\textbf{Input:} a list $L$ of $\Nd' = N^{1 + o(1)}$ lattice vectors of norm 1, a constant $\gamma < 1$ and parameter $c \in (0,1)$. \\
	\textbf{Output:} a list $L'$ of $\Nd'$ lattice vectors of norm at most $\gamma$. \\ 
	\textbf{Algorithm:} 
	\begin{algorithmic}
	\STATE $ L' := \{\}$ (empty list) 
	\WHILE{$|L'| \le \Nd'$} 
	\STATE Sample a random product code $C$ of parameter $[d,\log(d),\Nd^{\frac{1-c}{\log(d)}}]$. Let $\cv_1,\dots,\cv_{\Nd^{1-c}}$ be the code points of $C$ and let $\alpha \in [\pi/3,\pi/2] \textrm{ st. } \V_d(\alpha) = \frac{1}{\Nd^{1-c}}$.
	\FOR {$\vv$ in $L$}
	\STATE Add $\vv$ to its $\alpha$-filter's buckets $f_\alpha(\cv_i)$
	\ENDFOR 
	\FOR {\textbf{each} $i \in [\Nd^{1-c}]$}
	\STATE $S \leftarrow {\FINDALLSOLUTIONS}(f_\alpha(\cv_i),\gamma)$
	\STATE $L' := L' \cup S$
	\ENDFOR
	\ENDWHILE
	\STATE \textbf{return} $L'$
	\end{algorithmic}\label{Algorithm:Main}
\end{algorithm}

The $\FINDALLSOLUTIONS(f_\alpha(\sv_i),\gamma)$ subroutine starts from a list of vectors $\xv_1,\dots,\xv_{N^c} \in f_\alpha(\sv_i)$ and outputs all vectors of the form $\xv_i \pm \xv_j$ (with $i \neq j$) of norm less than $\gamma$. We want to find asymptotically all the solutions and not strictly all of them. Let's say here we want to output half of them. Sometimes, there are no solutions so the algorithm outputs an empty list. 

\subsection{Analysis of the above algorithm}
\subsubsection{Heuristics and simplifying assumptions.}

 We first present the heuristic arguments and simplifying assumptions we use for our analysis. 
\begin{enumerate}
	\item The input lattice points behave like random points on the sphere $\S^{d-1}$. The relevance of this heuristic has been studied and confirmed in a few papers starting from the initial NV-sieve \cite{NV08}.
	\item The code points of $C$ behave like random points of the sphere $\S^{d-1}$. This was argued in \cite{BDGL16}, see for instance Lemma 5.1 and Appendix C therein.
	\item We assume that a random point in $f_\alpha(\cv_i)$ is on the border of the filter, {\ie } that it can be written $\xv = \cos(\alpha) \cv_i + \sin(\alpha) \yv$ with $\yv \bot \cv_i$ and of norm $1$. As we argue below, this will be approximately true with very high probability.
\end{enumerate}

In order to argue point $3$, notice that for any angle $\alpha \in (\pi/4, \pi/2)$ and $\eps > 0$, we have $\V_d(\alpha) \gg \V_d(\alpha-\epsilon)$. 
Indeed, for an angle $\epsilon>0$, $\V_d(\alpha-\epsilon)=\sin^d(\alpha-\epsilon) = \V_d(\alpha) \cdot (\epsilon')^d$ with $\epsilon' = \cos\epsilon - \frac{\sin \epsilon \cos \alpha}{\sin \alpha} < 1$ for $\alpha > \epsilon$. 
So the probability for a point to be at angle $\alpha$ with the center of the cap is exponentially higher than to be at angle $\alpha-\epsilon$. That justifies that with very high probability, points in $f_\alpha(\cv_i)$ lie at the border of the cap and hence justifies point $3$.

\subsubsection{Completion.} We start from a list $L$ of $N'$ points. The heuristic states that each point in $L$ is modeled as a random point on the sphere $S^{d-1}$ so each pair of points $\xv,\xv' \in L$ reduces with probability $\V_d(\pi/3) = \frac{1}{N}$. Since there are $\frac{N'(N'-1)}{2}$ pairs of points in $L$, we have on average $\frac{N'(N'-1)}{2N}$ pairs in $L$ that are reducible. We can take for example $N' = 6N$ to ensure that there are on average $\approx 3N'$ pairs.   Therefore, each time we find a random reducible pair, with probability at least $\frac{3N' - |L'|}{3N'} \ge 2/3$, it wasn't already in the list $L'$.

Throughout the rest of the paper

\subsubsection{Time analysis.}
\paragraph{Condition of reduction of vectors.}
Consider two random vectors $\xv_0$ and $\xv_1$ that are in the same $\alpha$-filter's bucket of center $\cv$. We write
\begin{align}
\label{Eq:1} \xv_0 & = \cos(\alpha) \cv + \sin(\alpha) \yv_0 \\ 
\label{Eq:2} \xv_1 & = \cos(\alpha) \cv + \sin(\alpha) \yv_1
\end{align}
with $\yv_0, \yv_1$ of norm 1 both orthogonal to $\cv$. The vectors $\yv_0$ and $\yv_1$ are called residual vectors and if 
$\xv_0$ and $\xv_1$ are random vectors of $f_\alpha(\cv)$ then $\yv_0,\yv_1$ are random vectors in the sphere of dimension $d-2$ of vectors of norm $1$ orthogonal to $\cv$.

\begin{proposition} \label{theta_star}
	Using the notations just above, we have
	$$\theta(\yv_0,\yv_1) \leqslant 2 \arcsin\big( \frac{1}{2 \sin(\alpha)} \big) \Longleftrightarrow \theta(\xv_0,\xv_1) \leqslant \frac{\pi}{3}.$$
\end{proposition}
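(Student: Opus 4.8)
The plan is to turn both sides of the claimed equivalence into statements about inner products and then connect them by elementary trigonometric identities. First I would record that $\xv_0$ and $\xv_1$ are unit vectors: since $\yv_i \perp \cv$ and $\|\yv_i\| = \|\cv\| = 1$, equations \eqref{Eq:1}--\eqref{Eq:2} give $\|\xv_i\|^2 = \cos^2(\alpha) + \sin^2(\alpha) = 1$. Consequently $\theta(\xv_0,\xv_1) \le \pi/3$ is equivalent to $\inp{\xv_0}{\xv_1} \ge \cos(\pi/3) = \tfrac12$. Expanding the inner product and again using $\inp{\cv}{\yv_0} = \inp{\cv}{\yv_1} = 0$, I get
$$\inp{\xv_0}{\xv_1} = \cos^2(\alpha) + \sin^2(\alpha)\cos\big(\theta(\yv_0,\yv_1)\big).$$

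Writing $\theta' := \theta(\yv_0,\yv_1) \in [0,\pi]$, the condition $\theta(\xv_0,\xv_1)\le \pi/3$ is therefore equivalent to $\cos^2(\alpha) + \sin^2(\alpha)\cos\theta' \ge \tfrac12$, i.e. (using $\cos^2\alpha = 1 - \sin^2\alpha$) to $\cos\theta' \ge 1 - \tfrac{1}{2\sin^2(\alpha)}$. Now I would apply the half-angle identity $1 - \cos\theta' = 2\sin^2(\theta'/2)$ to rewrite this as $\sin^2(\theta'/2) \le \tfrac{1}{4\sin^2(\alpha)}$, and then take square roots: since $\theta'/2 \in [0,\pi/2]$, the quantity $\sin(\theta'/2)$ is nonnegative, so this is equivalent to $\sin(\theta'/2) \le \tfrac{1}{2\sin(\alpha)}$. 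Because $\alpha \in [\pi/3,\pi/2]$ we have $\sin(\alpha) \ge \sqrt3/2$, hence $\tfrac{1}{2\sin(\alpha)} \le \tfrac{1}{\sqrt3} < 1$, so the right-hand side lies in the range of $\sin$ on $[0,\pi/2]$; applying the increasing function $\arcsin$ on $[0,1]$ to both sides yields $\theta'/2 \le \arcsin\!\big(\tfrac{1}{2\sin(\alpha)}\big)$, i.e. $\theta(\yv_0,\yv_1) \le 2\arcsin\!\big(\tfrac{1}{2\sin(\alpha)}\big)$, which is exactly the desired equivalence.

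There is no real obstacle here beyond bookkeeping: the whole argument is a chain of equivalences, and the only thing that needs care is checking, at the one step where a square root and an $\arcsin$ are taken, that the relevant arguments lie in the intervals where $\sin$ and $\arcsin$ are monotone (namely $\theta'/2 \in [0,\pi/2]$, which holds since $\theta' \le \pi$, and $\tfrac{1}{2\sin\alpha} \in (0,1]$, which holds since $\alpha \ge \pi/3$). Everything else is a direct computation with the decompositions \eqref{Eq:1}--\eqref{Eq:2}.
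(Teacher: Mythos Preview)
Your proof is correct and is essentially the same as the paper's: the paper works with $\|\xv_0-\xv_1\|^2 \le 1$ while you work with $\inp{\xv_0}{\xv_1}\ge \tfrac12$, but for unit vectors these are the same inequality, and both routes reach the key step $\cos\theta' \ge 1 - \tfrac{1}{2\sin^2\alpha}$ before converting it via the half-angle identity into the $\arcsin$ form. If anything, your version is slightly more careful in verifying that $\tfrac{1}{2\sin\alpha}\in(0,1]$ and $\theta'/2\in[0,\pi/2]$ so that the $\arcsin$ step is a genuine equivalence.
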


\begin{proof}
	We denote for simplicity $\theta_y := \theta(\yv_0,\yv_1)$. By subtracting Equation \ref{Eq:2} from Equation \ref{Eq:1} and then by squaring, we have 
	\begin{alignat}{2}
	\|\xv_0 - \xv_1\|^2 \leqslant 1 & \Leftrightarrow \sin^2(\alpha) \|\yv_0-\yv_1\|^2 \leqslant 1 \notag \\
	& \Leftrightarrow \sin^2(\alpha) (2-2 \cos(\theta_y)) \leqslant 1 \notag \\
	& \Leftrightarrow \cos(\theta_y) \geqslant 1 - \frac{1}{2 \sin^2(\alpha)} \notag \\
	& \Leftrightarrow \theta_y \leqslant \arccos \big( 1-\frac{1}{2 \sin^2(\alpha)}\big) = 2 \arcsin \big( \frac{1}{2 \sin(\alpha)} \big) \text{, true for } \alpha \leqslant \pi/2. \notag
	\end{alignat}
\end{proof}

\begin{corollary}\label{delta}
	Let $\alpha \in [\pi/3, \pi/2]$ and a random pair of vectors in a same $\alpha$-filter's bucket. 
	The probability that the pair is reducing is equal to $\V_{d-1}(\theta^*_\alpha)$, with 
	$$\theta^*_\alpha := 2 \arcsin \Big( \frac{1}{2 \sin(\alpha)} \Big).$$
\end{corollary}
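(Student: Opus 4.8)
The plan is to reduce the statement, via Proposition~\ref{theta_star}, to computing the probability that two independent uniformly random points on a $(d-1)$-dimensional sphere subtend an angle at most $\theta^*_\alpha$, and then to read off that probability from Proposition~\ref{volume_cap}.

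First I would fix the center $\cv$ of the bucket and use the modelling set up in Equations~\eqref{Eq:1}--\eqref{Eq:2} together with the discussion preceding Proposition~\ref{theta_star}: under the third simplifying assumption (which places points on the border of the filter), a uniformly random pair $(\xv_0,\xv_1)$ of vectors of $f_\alpha(\cv)$ is described by $\xv_b = \cos(\alpha)\cv + \sin(\alpha)\yv_b$, where the residual vectors $\yv_0,\yv_1$ are independent and uniformly distributed over the set of unit vectors orthogonal to $\cv$. Since that set is the unit sphere of the hyperplane $\cv^\perp$, a $(d-1)$-dimensional Euclidean space, it is a copy of $\mathcal{S}^{d-2}$, i.e. the unit sphere of a space of dimension $d-1$.

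Next I would invoke Proposition~\ref{theta_star}, which states precisely that $\norm{\xv_0-\xv_1}\le 1$ — equivalently $\theta(\xv_0,\xv_1)\le\pi/3$, i.e. the pair reduces — holds if and only if $\theta(\yv_0,\yv_1)\le\theta^*_\alpha$ with $\theta^*_\alpha = 2\arcsin\!\big(\tfrac{1}{2\sin\alpha}\big)$. For $\alpha\in[\pi/3,\pi/2]$ one has $\tfrac{1}{2\sin\alpha}\in[\tfrac12,\tfrac{1}{\sqrt3}]\subset(0,1)$, so $\theta^*_\alpha$ is a well-defined angle, and in fact $\theta^*_\alpha\in[\pi/3,\pi/2)$, which is in the range where Proposition~\ref{volume_cap} applies. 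It therefore remains to evaluate $\Pr[\theta(\yv_0,\yv_1)\le\theta^*_\alpha]$ for $\yv_0,\yv_1$ independent and uniform on $\mathcal{S}^{d-2}$. Conditioning on $\yv_0$ and using rotational invariance of the uniform measure, this equals the ratio of the volume of the spherical cap $\mathcal{H}_{\yv_0,\theta^*_\alpha}$ (inside the ambient $(d-1)$-dimensional unit sphere) to the volume of that sphere, which by Proposition~\ref{volume_cap}, instantiated with ambient dimension $d-1$ in place of $d$, equals $\V_{d-1}(\theta^*_\alpha)$. This is the claimed probability.

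The only delicate point is the dimension bookkeeping: the residual vectors do not live on $\mathcal{S}^{d-1}$ but on the unit sphere of $\cv^\perp$, which is $\mathcal{S}^{d-2}$ and hence the unit sphere of a $(d-1)$-dimensional space, so the cap-volume formula must be read as $\V_{d-1}$ rather than $\V_d$ or $\V_{d-2}$. If one also wishes to count reductions of the form $\norm{\xv_0+\xv_1}\le 1$, these occur only when $\yv_0,\yv_1$ are nearly antipodal and contribute at most a bounded multiplicative factor, which is absorbed into the $\poly(d)$ hidden in the definition of $\V$ (and for $\alpha$ close to $\pi/3$ this extra contribution is in fact exponentially smaller). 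Everything else is a direct substitution into Propositions~\ref{theta_star} and~\ref{volume_cap}, so no further estimation is needed.
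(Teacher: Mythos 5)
Your proof is correct and matches the paper's (implicit) argument: the corollary is read off directly from Proposition~\ref{theta_star} together with Proposition~\ref{volume_cap} applied to the residual vectors, which are uniform on the unit sphere of the $(d-1)$-dimensional space $\cv^\perp$, giving $\V_{d-1}(\theta^*_\alpha)$. The dimension bookkeeping and the aside about $\xv_0+\xv_1$ reductions are careful additions, but the core reasoning is the same as what the paper intends when it states the corollary without a separate proof.
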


Notice that we have $\V_{d-1}$ because we work with residual vectors (orthogonal to $\sv$) but since $\V_d$ and $\V_{d-1}$ are asymptotically equivalent, we will keep writing $\V_d(\theta^*_\alpha)$ everywhere for simplicity.
From the above corollary, we have that for an $\alpha$-filter that has $\Nd^c$ points randomly distributed in this filter, the expected number of reducing pairs is $\Nd^{2c} \cdot \V_{d-1}(\theta^*_\alpha)$.

\begin{proposition}\label{Proposition:AlgorithmGeneral}
	Consider Algorithm \ref{Algorithm:Main} with parameter $c \in [0,1]$ and associated angle $\alpha \in [\pi/3, \pi/2]$ satisfying $\V_d(\alpha) = \Nd^{-(1-c)}$.
	Let $\zeta$ such that $\Nd^{\zeta} = \Nd^{2c} \scdot \V_{d-1}(\theta^*_\alpha)$. The above algorithm runs in time $T = \NBREP \scdot (\INIT + \FAS)$ where 
	$$ {\NBREP} = \max\{1,\Nd^{c - \zeta + o(1)}\} \quad ; \quad \INIT = \Nd^{1 + o(1)} \quad ; \quad \FAS = \Nd^{1-c} \FAS_1$$
	where $\FAS_1$ the running time of a single call to the ${\FINDALLSOLUTIONS}$ subroutine. 
\end{proposition}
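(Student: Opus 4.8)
The plan is to analyze Algorithm~\ref{Algorithm:Main} by separating the cost of one iteration of the outer \textbf{while} loop from the number of iterations needed, and then to justify each of the three claimed quantities $\NBREP$, $\INIT$, and $\FAS$ in turn.

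First I would fix one iteration of the \textbf{while} loop and bound its cost. Sampling the random product code $C$ of parameters $[d,\log d,\Nd^{(1-c)/\log d}]$ costs only $\poly(d)\cdot\Nd^{1-c}$, which is absorbed in the $o(1)$ in the exponent. The first inner \textbf{for} loop iterates over all $\vv\in L$; by Proposition~\ref{Proposition:Decoding} (with $m=\log d$ and $B^m = N^{O(1)}$), decoding $\vv$ to find its relevant $\alpha$-filters among $C$ costs $\Nd^{o(1)}\cdot|\mathcal{H}_{\vv,\alpha}\cap C|$. Since $\V_d(\alpha)=\Nd^{-(1-c)}$ and there are $\Nd^{1-c}$ codewords behaving like random points on the sphere (heuristic~2), the expected number of relevant filters per vector is $\Nd^{1-c}\cdot\V_d(\alpha)=\Nd^{o(1)}$, and Proposition~\ref{Proposition:Chernoff2} gives that with overwhelming probability no vector has more than $\Nd^{o(1)}$ relevant filters; summing over the $|L|=\Nd^{1+o(1)}$ vectors gives total bucketing cost $\Nd^{1+o(1)}=\INIT$. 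The second inner \textbf{for} loop calls $\FINDALLSOLUTIONS$ once for each of the $\Nd^{1-c}$ filters, each call costing $\FAS_1$ by definition, so the total is $\Nd^{1-c}\FAS_1=\FAS$. Hence one \textbf{while}-iteration costs $\INIT+\FAS$ (the set-union bookkeeping $L':=L'\cup S$ is polynomial per element and again absorbed).

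Next I would count the number of \textbf{while}-iterations, i.e.\ justify $\NBREP=\max\{1,\Nd^{c-\zeta+o(1)}\}$. Each filter receives, in expectation, $N^c$ of the $N'$ list vectors: indeed each vector lands in $\Nd^{o(1)}$ filters and there are $\Nd^{1-c}$ filters, so on average a filter holds $\Nd^{1+o(1)}\cdot\Nd^{o(1)}/\Nd^{1-c}=\Nd^{c+o(1)}$ vectors, and Proposition~\ref{Proposition:Chernoff2} again controls the deviation. By Corollary~\ref{delta}, a random pair inside a fixed $\alpha$-filter reduces with probability $\V_{d-1}(\theta^*_\alpha)$, so the expected number of reducing pairs found per filter is $\Nd^{2c}\cdot\V_{d-1}(\theta^*_\alpha)=\Nd^{\zeta}$ by definition of $\zeta$, and summing over the $\Nd^{1-c}$ filters gives $\Nd^{1-c+\zeta}$ solutions found per \textbf{while}-iteration. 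Since we need $\Theta(N')=\Nd^{1+o(1)}$ solutions in total (by the completion argument, a constant fraction of the $\Theta(N)$ reducible pairs suffices, and each newly found pair is new with probability $\ge 2/3$), the number of iterations is $\Nd^{1+o(1)}/\Nd^{1-c+\zeta}=\Nd^{c-\zeta+o(1)}$ when this exceeds $1$, and of course at least one iteration is always performed; this gives the stated $\max\{1,\Nd^{c-\zeta+o(1)}\}$.

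Finally, combining the two parts, the total running time is $T=\NBREP\cdot(\INIT+\FAS)$ as claimed. The main obstacle, and the step requiring the most care, is making the per-iteration accounting rigorous \emph{in expectation over the re-randomization}: the quantities ``$\Nd^c$ vectors per filter'' and ``$\Nd^\zeta$ reducing pairs per filter'' are averages, and one must invoke heuristics~1--3 together with the concentration bound of Proposition~\ref{Proposition:Chernoff2} to argue these hold up to $\Nd^{o(1)}$ factors with high enough probability that the \textbf{while} loop terminates within $\Nd^{c-\zeta+o(1)}$ iterations; a subtle point is that a fresh code $C$ is sampled each iteration, so reducing pairs missed in one iteration (because the two vectors never share a filter) get independent new chances, which is exactly why the expected yield per iteration, rather than a worst-case guarantee, is the right quantity to track. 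Everything else is routine bookkeeping of polynomial and $\Nd^{o(1)}$ factors.
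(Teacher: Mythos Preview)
Your proof is correct and follows essentially the same approach as the paper: analyze one \textbf{while}-iteration by bounding the bucketing cost via Proposition~\ref{Proposition:Decoding} and the expected $\Nd^{o(1)}$ filters per vector, note that $\FAS=\Nd^{1-c}\FAS_1$ by definition, and then divide the target $\Nd^{1+o(1)}$ solutions by the per-iteration yield $\Nd^{1-c+\zeta}$ to obtain $\NBREP$. You are in fact somewhat more careful than the paper, which does not explicitly invoke the concentration bound or the fresh-code-per-iteration argument; both are implicit in the paper's heuristic treatment but your making them explicit is an improvement, not a deviation.
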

\begin{proof}
	We first analyze the two \textbf{for} loops. $\INIT$ is the running time of the first loop. For each point $\vv \in L$, we need to compute $\hh_{\vv,\alpha} \cap C$ and update the corresponding buckets $f_\alpha(\sv_i)$. We have $|C| = N^{1-c}$ and we chose $\alpha$ such that $\V_d(\alpha) = N^{-(1-c)}$, so the expected value of $|\hh_{\vv,\alpha} \cap C|$ is $1$. For each point $\vv$, we can compute $\hh_{\vv,\alpha} \cap C$ in time $N^{o(1)}|\hh_{\vv,\alpha}|$ using Proposition \ref{Proposition:Decoding}. From there, we can conclude that we compute the filter for the $N'$ points in time $\INIT = N^{1 + o(1)}$.
	
	The second loop runs in time $\FAS = \Nd^{1-c} \FAS_1$ by definition. After this loop, the average number of solutions found is $\Nd^{\zeta}$ for each call to $\FINDALLSOLUTIONS$ so $\Nd^{1-c+\zeta} $ in total (notice that we can have $\zeta < 0$, which means that we can find on average much less that one solution for each call of $\FINDALLSOLUTIONS$). We run the \textbf{while} loop until we find $\Nd'$ solutions so we must repeat this process $\NBREP = \max\{1,\Nd^{1 - (1 - c + \zeta) + o(1)}\} = \max\{1,\Nd^{c - \zeta + o(1)}\}$ times.
\end{proof}

This formulation of sieving algorithms is easy to analyze. Notice that the above running time depends only on $c$ (since $\alpha$ can be derived from $c$ and $\zeta$ can be derived from $c,\alpha$) and on the $\FINDALLSOLUTIONS$ subroutine. We now retrieve the best known classical and quantum sieving algorithm in this framework. \\ \\
\emph{Best classical algorithm.} In order to retrieve the time exponent of \cite{BDGL16}, we take $c \rightarrow 0$, which implies $\alpha \rightarrow \pi/3$. We can compute $\theta^*_{\pi/3} \approx 1.23\text{rad} \approx 70.53\degree$ and $\zeta = -0.4094$. In this case, we have $\FAS_1 = O(1)$. From the above proposition, we get a total running time of $T = \Nd^{1.4094 + o(1)} = 2^{0.2925d + o(d)}$. \\ \\
\emph{Best quantum algorithm.} In order to retrieve the time exponent of \cite{Laa16}, we take $c = 0.2782$. This value actually corresponds to the case where $\zeta = 0$, so we have on average one solution per $\alpha$-filter. For the $\FINDALLSOLUTIONS$ subroutine, we can apply Grover's algorithm on pairs of vectors in the filter to find this solution in time $\sqrt{\Nd^{2c}} = \Nd^{c}$ (there are $\Nd^{2c}$ pairs) so $\FAS_1 = \Nd^c$. Putting this together, we obtain $T = \Nd^{1+c + o(1)} = \Nd^{1.2782 + o(1)} = 2^{0.2653d + o(d)}$. \\

In the next section, we show how to improve the above quantum algorithm. Our main idea is to replace Grover's algorithm used in the $\FINDALLSOLUTIONS$ subroutine with a quantum random walk. In the next section, we present the most natural quantum walk which is done over a Johnson graph and where a vertex is marked if the points of a vertex contain a reducible pair, in a similar way than for element distinctness. We then show in a later section how this random walk can be improved by relaxing the condition on marked vertices. 
\section{Quantum random walk for the $\FINDALLSOLUTIONS$ subroutine: a first attempt}\label{Section:QuantumWalk}
\subsection{Constructing the graph}\label{Section:QuantumWalkDescription}
We start from an unordered list $\xv_1,\dots,\xv_{N^c}$ of distinct points in a filter $f_\alpha(\sv)$ with $\alpha$ satisfying $\V_d(\alpha) = \frac{1}{N^{1-c}}$. Let $L_x$ be this list of $\xv_i$. For each $i \in [N^c]$, we write $\xv_i = \cos(\alpha) \sv + \sin(\alpha)\yv_i$ where each $\yv_i$ is of norm $1$ and orthogonal to $\sv$. Recall from Proposition \ref{theta_star} that a pair $(\xv_i,\xv_j)$ is reducible iff. $\theta(\yv_i,\yv_j) = \theta^*_\alpha = 2\arcsin(\frac{1}{2\sin(\alpha)})$. We will work only on the residual vectors $\yv_i$ and present the quantum random walk that finds pairs $\yv_i,\yv_j$ such that $\theta(\yv_i,\yv_j) = \theta^*_\alpha$ more efficiently than with Grover's algorithm. Let $L_y = \yv_1,\dots,\yv_{N^c}$ be the list of all residual vectors. 

The quantum walk has two extra parameters $c_1 \in [0,c]$ and $c_2 \in [0,c_1]$. From these two parameters, let $\beta \in [\pi/3,\pi/2]$ \text{st.} $\V_{d}(\beta) = N^{c_2 - c_1}$ and ${\rho_0}$ \textrm{st.} $N^{\rho_0} = \frac{\V_d(\beta)}{\W_d(\beta,\theta^*_\alpha)}$. We start by sampling a random product code $\C_2$ with parameters $[(d-1),\log(d-1),N^{\frac{{\rho_0} + c_1 - c_2}{\log(d-1)}}]$ which has therefore $N^{{\rho_0} + c_1 - c_2} = \frac{1}{\W_d(\beta,\theta^*_\alpha)}$ points denoted $\tv_1,\dots,\tv_{N^{{\rho_0} + c_1 - c_2}}$. We perform our quantum random walk on a graph $G = (V,E)$ where each vertex $v \in V$ contains:
\begin{itemize}
	\item An unordered list $L^v_y = \yv_1,\dots,\yv_{N^{c_1}}$ of distinct points taken from $L_y$.
	\item For each $\tv_i \in \C_2$, we store the list of elements of $ J^v(\tv_i) := f_{\beta}(\tv_i) \cap L^v_y$. For each $\tv_i$, we do this using a quantum data structure that stores $J^v(\tv_i) $ where we can add and delete efficiently in quantum superposition. This can be done with QRAM. Notice that we have on average 
	$$|J^v(\tv_i) |= N^{c_1} \scdot \V_{d}(\beta) = N^{c_2},$$
 	and we need to store in total $|\C_2| \scdot N^{c_2} = N^{c_1 + {\rho_0}}$ such elements in total for each vertex.
	\item A bit that says whether the vertex is marked (we detail the marked condition below). 
\end{itemize}

The vertices of $G$ consists of the above vertices for all possible lists $L^v_y$. We have $(v,w) \in E$ if we can go from $L_y^v$ to $L_y^w$ by changing exactly one value. In order words 
$$ (v,w) \in E \Leftrightarrow \exists \yv_{old} \in L_y^v \text{ and } \yv_{new} \in L_y \backslash L_y^v \text{ st. } L_y^w = \left(L_y^v\backslash\{\yv_{old}\}\right)\cup\{\yv_{new}\}.$$

This means the graph $G$ is exactly a Johnson graph $J(N^c,N^{c_1})$ where each vertex also has some additional information as we described above. Once we find a marked vertex, it contains a pair $(\yv_i,\yv_j)$ such that $\theta(\yv_i,\yv_j) \le \theta^*_\alpha$ from which we directly get a reducible pair $(\xv_i,\xv_j)$.

\paragraph{Condition for a vertex to be marked.}
We define the following subsets of vertices . We first define the set $M_0$ vertices for which there exists a pair of points which is reducible.
$$M_0 \eqdef \{v \in V : \exists \yv_i, \yv_j \neq \yv_i \in L^v_y, \theta(\yv_i,\yv_j) \le \theta^*_\alpha\}.$$
Ideally, we would want to mark each vertex in $M_0$, however this would induce a too large update cost when updating the bit that specifies whether the vertex is marked or not. Instead, we will consider as marked vertices subsets of $M_0$ but for which the update can be done more efficiently, but losing only a small fraction of the marked vertices. For each $J^v(\tv_i)$, we define $\widetilde{J}^v(\tv_i)$ which consists of the first $2N^{c_2}$ elements of $J^v(\tv_i)$\footnote{We consider an global ordering of elements of $L_y$, for example with respect to their index, and $J^v(\tv_i)$ consists of  the $2N^{c_2}$ elements of ${J}^v(\tv_i)$ which are the smallest with respect to this ordering.} and if  $|J^v(\tv_i)| \le 2N^{c_2}$, we have $\widetilde{J}^v(\tv_i) = J^v(\tv_i)$ . We define the set of marked elements $M$ as follows:
$$M \eqdef \{v \in V : \exists \tv \in \C_2, \exists \yv_i, \yv_j \neq \yv_i \in \widetilde{J}^v(\tv), \textrm{ st. } \theta(\yv_i,\yv_j) \le \theta^*_\alpha\}.$$

The reason for using such a condition for marked vertices is that when we will perform an update, hence removing a point $\yv_{old}$ from a vertex and adding a point $\yv_{new}$, we will just need to look at the points in $\widetilde{J}^v(\tv)$ for $\tv \in  f_\beta(\yv_{new}) \cap \C_2$ which can be done faster than by looking at all the points of the vertex. If we used $J^v(\tv)$ instead of $\widetilde{J}^v(\tv)$ then the argument would be simpler but we would only be able to argue about the average running time of the update but the quantum walk framework require to bound the update for any pair of adjacent vertices\footnote{This problem arises in several quantum random walk algorithms, for example for quantum subset-sum algorithms. One solution is to use a heuristic that essentially claims that we can use the average running time of the update cost instead of the worst case. In our case, we don't need this heuristic as we manage to bound the update cost in the worst case.  We refer to \cite{BBSS20} for an interesting discussion on the topic.}. Also notice that each vertex still contain the sets $J^v(\tv_i)$ (from which one can easily compute $\widetilde{J}^v(\tv_i)$).

\subsection{Time analysis of the quantum random walk on this graph}\label{Section:QwalkTimeAnalysis}
We are now ready to analyze our quantum random walk, and compute its different parameters. Throughout our analysis, we define $K(\yv_i) \eqdef  f_\beta(\yv_i) \cap \C_2$ and we have on average
$$|K(\yv_i)| = N^{{\rho_0} + c_1 - c_2} \cdot \V_{d}(\beta) = N^{{\rho_0}}.$$
Using Proposition \ref{Proposition:Chernoff2}, we have for each $i$, 
\begin{align}
	\Pr[|K(\yv_i)|]>  2N^{\rho}]  \le  e^{-\frac{N^{\rho_0}}{3}}
\end{align}
and using a union bound, we have for any absolute constant $\rho_0 > 0$:
\begin{align}
	\Pr[\forall i \in [N^c], \ |K(\yv_i)| \le  2N^{\rho}] \ge 1 - N^ce^{-\frac{N^{\rho_0}}{3}} = 1 - o(1).
\end{align}
So for a fixed $\alpha$-filter, we have with high probability that each $|K(\yv_i)|$ is bounded by $2N^{\rho_0}$ and we assume we are in this case. The sets $K(\yv_i)$ can hence be constructed in time $N^{{\rho_0} + o(1)}$ using the decoding procedure (Proposition \ref{Proposition:Decoding}) for $\C_2$.

\paragraph{Setup cost.} In order to construct a full vertex $v$ from a list $L^v_y = \yv_1,\dots,\yv_{N^{c_1}}$, the main cost is to construct the lists $J^v(\tv_i) = f_\beta(\tv_i) \cap L^v_y$. To do this, we start from empty lists $J^v(\tv_i)$. For each $\yv_i \in L^v_y$, we construct the list $K(\yv_i) = f_\beta(\yv_i) \cap \C_2$ and for each codeword $\tv_j \in K(\yv_i)$, we add $\yv_i$ in $J^v(\tv_i)$. 

This takes time $N^{c_1} \scdot N^{{\rho_0} + o(1)}$. We can perform a uniform superposition of the vertices by performing the above procedure in quantum superposition. This can also be done in $N^{c_1} \scdot N^{{\rho_0} + o(1)}$ since we use a quantum data structure that performs these insertions in $J^v(\tv_i)$ efficiently. So in conclusion, 

$$ \S = N^{c_1 + {\rho_0} + o(1)}.$$

\paragraph{Update cost.} We show here how to go from a vertex $v$ with associated list $L_y^v$ to a vertex $w$ with $L_y^w = \left(L_y^v\backslash \{\yv_{old}\}\right) \cup \{\yv_{new}\}$. We start from a vertex $v$ so we also have the lists $J^v(\tv_i)= f_\beta(\tv_i) \cap L^v_y$. 

In order to construct the lists $J^w(\tv_i)$, we first construct $K(\yv_{old}) = f_\beta(\yv_{old}) \cap \C_2$ and for each $\tv_i$ in this set, we remove $\yv_{old}$ from $J^v(\tv_i)$. Then, we construct $K(\yv_{new}) $ and for each $\tv_i$ in this set, we add $\yv_{new}$ to $J^v(\tv_i)$, thus obtaining all the $J^w(\tv_i)$. Constructing the two lists takes time on average $N^{{\rho_0} + o(1)}$ and we then perform at most $2N^{{\rho_0}}$ deletion and insertion operations which are done efficiently. These operations take $N^{{\rho_0} + o(1)}$ deletions and insertions, which can be done efficiently. 

If $v$ was marked and $\yv_{old}$ is not part of the reducible pair then we do not change the last registers for $L_y^w$. If $v$ was not marked, then we have to ensure that adding $\yv_{new}$ doesn't make it marked. So we need to check whether there exists $\vec{y'} \neq\yv_{new}$ such that 
$$ \exists \tv \in \C_2, \yv_{new},\vec{y_0} \in \widetilde{J}^w(\tv) \text{ and } (\yv_{new},\yv_0) \text{ are reducible }.$$ 
 If such a point $\yv_0$ exists, it necessarily lies in the set $\cup_{\tv \in K(\yv_{new})} \widetilde{J}^v(\tv)$ which is of size at most $2N^{\rho}\scdot 2N^{c_2} = 4N^{\rho_0 + c_2}$. We perform a Grover search on this set to determine whether there exists a $\yv_0 \in \cup_{\tv \in \C_2} \widetilde{J}^v(\tv)$ that reduces with $\yv_{new}$, and this takes time $N^{\frac{{\rho_0} + c_1 + o(1)}{2}}$. In conclusion, we have that the average update time is 
 $$ \U = N^{{\rho_0} + o(1)} + N^{\frac{{\rho_0} + c_2 + o(1)}{2}} \le  N^{\max\{{\rho_0},\frac{{\rho_0} + c_2}{2}\} + o(1)}.$$

 \paragraph{Checking cost.} Each vertex has a bit that says whether it is marked or not so we have 
 $$\C = 1.$$
 
 \paragraph{Computing the fraction of marked vertices Epsilon.} 
 We prove here the following proposition
 \begin{proposition}
 	$\eps \ge \Theta\left(\min\left\{N^{2c_1}\V_d(\beta),1\right\}\right).$
 \end{proposition}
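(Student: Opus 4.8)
\smallskip
\noindent\textbf{Proof proposal.} The plan is to compute $\eps=|M|/|V|=\Pr_v[v\in M]$ directly for a uniformly random vertex $v$ of the Johnson graph $J(N^c,N^{c_1})$ underlying $G$. Reading off the definition of $M$, a vertex is marked precisely when it contains a reducible pair $(\yv_i,\yv_j)$ both of whose members are placed in one common second-layer bucket $\widetilde{J}^v(\tv)$ with $\tv\in\C_2$. So I would introduce the random variable $W(v)$ counting the witnessing triples $(\{\yv_i,\yv_j\},\tv)$ with $\tv\in\C_2$, $\{\yv_i,\yv_j\}\subseteq J^v(\tv)$ and $\theta(\yv_i,\yv_j)\le\theta^*_\alpha$, and show $\Pr[W(v)\ge1]\ge\Omega(\min\{N^{2c_1}\V_d(\beta),1\})$.

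First I would neutralise the gap between $\widetilde{J}^v(\tv)$ and $J^v(\tv)$. Fixing the randomness of the $N^c$ filter points (independent uniform on $\S^{d-1}$ by the heuristics of Section~\ref{Section:AlgorithmSkeleton}) and of $\C_2$, Proposition~\ref{Proposition:Chernoff2} with $M=N^{c_1}$ gives $\Pr[\,|J^v(\tv)|\ge2N^{c_2}\,]\le e^{-N^{c_2}/3}$ since $\E[|J^v(\tv)|]=N^{c_1}\V_d(\beta)=N^{c_2}$, and a union bound over the $|\C_2|=N^{\rho_0+c_1-c_2}$ codewords shows that, with probability $1-o(1)$, every bucket of every vertex satisfies $\widetilde{J}^v(\tv)=J^v(\tv)$; I use $c_2>0$ here, and it is precisely the degenerate regime $c_2=0$ (where buckets are truncated to $2$ elements) that the $\min$ with $1$ accounts for. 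On this event $\{W(v)\ge1\}=\{v\in M\}$, so it remains to control $W(v)$.

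The core is a first/second-moment estimate of $W(v)$. For $\E[W(v)]$ I would sum, by linearity over the fixed $N^c$ filter points, the contribution of each global configuration made of a reducible pair (Corollary~\ref{delta}) and a codeword of $\C_2$ relevant to both of its endpoints --- the identity $|\C_2|=1/\W_d(\beta,\theta^*_\alpha)$ makes the expected number of such codewords per reducible pair $\Theta(1)$ --- each weighted by the probability that both endpoints of the pair fall into a uniformly random $N^{c_1}$-subset; carrying out this bookkeeping with the relations that define $\beta$, $\rho_0$ and $\C_2$ gives $\E[W(v)]=\Theta(N^{2c_1}\V_d(\beta))$. For $\E[W(v)^2]$ I would split the pairs of witnessing configurations according to their overlap: vertex-disjoint pairs contribute, by the heuristic near-independence of the filter points, essentially $\E[W(v)]^2$; pairs sharing an endpoint or a codeword are rare and weakly correlated, so their total contribution is $O(\E[W(v)]+\E[W(v)]^2)$. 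Hence $\E[W(v)^2]=O(\E[W(v)]+\E[W(v)]^2)$, and the Paley--Zygmund (equivalently Chung--Erd\H{o}s) inequality yields $\eps=\Pr[W(v)\ge1]\ge\Omega\big(\E[W(v)]^2/\E[W(v)^2]\big)=\Omega(\min\{N^{2c_1}\V_d(\beta),1\})$.

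I expect the real obstacle to be the second-moment step --- bounding the cross-terms of $\E[W(v)^2]$ produced by two witnessing configurations that share an endpoint or a $\C_2$-codeword, since these are what could inflate the variance. This is the place where one leans hardest on the heuristic that the filter points behave like independent uniform points on $\S^{d-1}$, and where one must keep the first-step conditioning in force so that $W(v)$ still detects membership in $M$. By contrast, evaluating $\E[W(v)]$ is routine: it uses only the volume ratio $\V_d$, the wedge ratio $\W_d$, the border decomposition $\xv=\cos\beta\,\tv+\sin\beta\,\uv$ of points of a $\beta$-cap, and the defining parameter identities.
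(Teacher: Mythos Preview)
Your approach is correct but takes a genuinely different route from the paper. Both proofs begin the same way: isolate the event $E_2=\{\forall\tv\in\C_2,\ |J^v(\tv)|\le 2N^{c_2}\}$ via Proposition~\ref{Proposition:Chernoff2} and a union bound, so that on $E_2$ one has $\widetilde{J}^v(\tv)=J^v(\tv)$ and the truncation disappears. Where you then set up a single random variable $W(v)$ counting witnessing triples $(\{\yv_i,\yv_j\},\tv)$ and run Paley--Zygmund, the paper instead \emph{decouples} the two layers by conditioning: it first lower-bounds $\Pr[E_0]$, where $E_0$ is the event that some reducible pair exists in $L^v_y$ at all, and then shows $\Pr[E_1\mid E_0]=\Theta(1)$ directly from the single-line computation $\Pr[\exists\tv\in\C_2:\yv_i,\yv_j\in J^v(\tv)]=1-(1-\W_d(\beta,\theta^*_\alpha))^{|\C_2|}$ together with $|\C_2|=1/\W_d(\beta,\theta^*_\alpha)$. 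This avoids any second-moment calculation over triples: the only place a variance-type argument is (implicitly) needed is for $\Pr[E_0]$, which involves only pairs of sphere points and is standard under the i.i.d.\ heuristic. Your route is heavier---the cross-terms in $\E[W(v)^2]$ where two witnessing triples share a codeword $\tv$ are extra bookkeeping the paper never has to touch---but it is self-contained and would also give the result.

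One small correction: your aside that ``the degenerate regime $c_2=0$ is what the $\min$ with $1$ accounts for'' is not quite right. The $\min$ with $1$ appears simply because $\Pr[E_0]\le 1$; it caps the first-moment estimate $N^{2c_1}\V_d(\theta^*_\alpha)$ once it exceeds $1$, independently of $c_2$. The constraint $c_2>0$ is needed separately, and only so that the Chernoff bound on $|J^v(\tv)|$ is nontrivial.
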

\begin{proof}
 We consider a random vertex in the graph and lower bound the probability that it is marked. A sufficient condition for a vertex $v$ to be marked is if it satisfies the following $2$ events :
 \begin{itemize}
 	\item $E_1:$ \ $\exists \tv \in \C_2, \exists \yv_i,\yv_j \neq \yv_i \in J^v(\tv), \textrm{ st. } \theta(\yv_i,\yv_j) \le \theta^*_{\alpha}$.
 	\item $E_2 :$ \ $\forall \tv \in \C_2, |J^v(\tv)|  \le 2N^{c_2}$.
 \end{itemize}
The second property implies that $\forall \tv \in \C_2, \ J^v(\tv) = \widetilde{J}^v(\tv)$ and in that case, the first property implies that $v$ is marked. We now bound the probability of each event

\begin{lemma}\label{Lemma:Lemma1}
	$\Pr[E_1]  \ge \Theta\left(\min\left\{N^{2c_1}\V_d(\beta),1\right\}\right)$.
\end{lemma}

\begin{proof}
	For a fixed pair $\yv_i,\yv_j \neq \yv_i \in L^v_y$, we have $\Pr[\theta(\yv_i,\yv_j) \le \theta^*_\alpha] = \V_d(\theta^*_\alpha)$. Since there are $\Theta(N^{2c_1})$ such pairs, if we define the event $E_0$ as: $\exists \yv_i,\yv_j \neq \yv_i \in L^v_y, \textrm{ st. } \theta(\yv_i,\yv_j) \le \theta^*_\alpha$, we have  
	$$ \Pr[E_0] \ge \Theta\left(\min\left\{N^{2c_1}\V_d(\beta),1\right\}\right).$$
	Now we assume $E_0$ holds and we try to compute the probability that $E_1$ is true conditioned on $E_0$. So we assume $E_0$ and let $\yv_i,\yv_j \neq \yv_i \in L^v_y, \textrm{ st. } \theta(\yv_i,\yv_j) \le \theta^*_\alpha$. For each code point $\tv \in \C_2$, we have 
	\begin{align*}
		\Pr[\yv_i,\yv_j \in J^v(\tv)] = \Pr[\tv \in \hh_{\yv_i,\beta} \cap \hh_{\yv_j,\beta}] = \W_d(\beta,\theta^*_\alpha).
	\end{align*}
  Therefore, we have 
  \begin{align}\label{Eq:OfLemma1}\Pr[\exists \tv \in \C_2, \ \yv_i,\yv_j \in J^v(\tv)] = 1 - (1-\W_d(\beta,\theta^*_\alpha))^{|C_2|}.
  \end{align}
  Since $|\C_2 |= \frac{1}{\W_d(\beta,\theta^*_\alpha)}$, we can conclude 
  $$\Pr[E_1|E_0] \ge \Pr[\exists \tv \in \C_2, \ \yv_i,\yv_j \in J^v(\tv)] = 1 - (1-\W_d(\beta,\theta^*_\alpha))^{|C_2|} \ge \Theta(1),$$
  which implies $\Pr[E_1] \ge  \Pr[E_1|E_0]\scdot \Pr[E_0] \ge \Theta\left(\max\left\{N^{2c_1}\V_d(\beta),1\right\}\right)$.
\end{proof}
\begin{lemma}
	$\Pr[E_2] \ge 1 - |C_2|e^{-\frac{N^{c_2}}{3}}$.
\end{lemma}
\begin{proof}
	For each $\tv \in \C_2$, we have using Proposition \ref{Proposition:Chernoff2} that $\Pr[|J^v(\tv)| \le 2N^{c_2}] \ge 1 - e^{-\frac{N^{c_2}}{3}}$. Using a union bound, we have 
	$$ \Pr[\forall \tv \in \C_2, |J^v(\tv)| \le 2N^{c_2}] \ge 1 - |C_2|e^{-\frac{N^{c_2}}{3}}.$$
\end{proof}
We can now finish the proof of our Proposition. We have 
\begin{align*}
	\eps \ge \Pr[E_1 \wedge E_2] & \ge \Pr[E_1] + \Pr[E_2] - 1 \\ & \ge \Theta\left(\max\left\{N^{2c_1}\V_d(\beta),1\right\}\right) - |C_2|e^{-\frac{N^{c_2}}{3}} \\ & \ge \Theta\left(\max\left\{N^{2c_1}\V_d(\beta),1\right\}\right) 
\end{align*}
The last inequality comes from the fact that $|C_2|e^{-\frac{N^{c_2}}{3}}$ is vanishing doubly exponentially in $d$ ($N$ is exponential in $d$) so it is negligible compared to the first term and is absorbed by the $\Theta(\scdot)$.
\end{proof}

 \paragraph{Computing the spectral gap Delta.} We are in a $J(N^c,N^{c_1})$ Johnson graph so we have 
 $$\delta \approx N^{-c_1}.$$
 
 \paragraph{Running time of the quantum walk.}
 The running time $T_1$ of the quantum walk is (omitting the $o(1)$ terms and the $O(\scdot)$ notations)
\begin{align*}
	T_1 & = \S + \frac{1}{\sqrt{\eps}}\left(\frac{1}{\sqrt{\delta}} ~ \U + \C\right) \\
	& = N^{c_1 + {\rho_0}} + \frac{1}{\max\{1,N^{c_1}\sqrt{\V_d(\theta^*_\alpha)}\}}\left(N^{\max\{{\rho_0},\frac{{\rho_0} + c_2}{2}\} + \frac{c_1}{2}}\right)
\end{align*}
In this running time, we can find one marked vertex with high probability if it exists. We repeat this quantum random walk until we find $\max\{\frac{N^{\zeta}}{2},1\}$ solutions.\\

\begin{minipage}{0.85\textwidth}
	\cadre{
		\begin{center}\textbf{Algorithm for the $\FINDALLSOLUTIONS$ procedure} \end{center}
		Pick a random product code $\C_2$. \\
		\textbf{while} the number of solutions found is $< \frac{N^{\zeta}}{2}$: \\
		$\quad$ Run our QRW to find a solution and add it to the list of solutions if it hasn't been found.
	} 
\end{minipage} $ \ $ \\

For $\zeta > 0$, there are $N^{\zeta}$ different solutions that can be found in each $\alpha$-filter. Each time we find a solution, since the list of solutions found is $< \frac{N^{\zeta}}{2}$. Therefore, the probability that each solutions found by the QRW is new is at least $\frac{1}{2}$. We have therefore 

$$\FAS_1 = \max\{N^{\zeta},1\}\cdot T_1.$$

If $\zeta > 0$, our algorithm finds $\Theta(N^{\zeta})$ solutions in time $N^{\zeta}T_1$ and if $\zeta \le 0$, our algorithm finds $1$ solution in time $T_1$ with probability $\Theta(N^{-\zeta})$.

\subsection{Memory analysis}

\paragraph{Classical space.}
We have to store at the same time in classic memory the $N$ list vectors of size $d$, and the buckets of the $\alpha$-filters. Each vector is in $N^{o(1)}$ $\alpha$-filter, so our algorithm takes classical space $N^{1 + o(1)}$.

\paragraph{Memory requirements of the quantum random walk.}
Each vertex $v$ of the graph stores all the $J^v(\tv_i)$ which together take space $N^{c_1 + {\rho_0}}$. We need to store a superposition of vertices so we need $N^{c_1 + {\rho_0}}$ quantum registers and we need that same amount of QRAM because we perform insertions and deletions in the database in quantum superposition. All the operations require QRAM access to the whole list $L_y$ which is classically stored and is of size $N^c$. Therefore, we also require $N^c$ QRAM.

\subsection{Optimal parameters for this quantum random walk}\label{Section:Optimal1}
Our algorithm takes in argument three parameters: $c \in [0,1]$, $c_1 \le c$ and $c_2 \le c_1$ from which we can express all the other variables we use: $\alpha$, $\theta^*_\alpha$, $\beta$, ${\rho_0}$ and $\zeta$. We recall these expressions as they are scattered throughout the previous sections:
\begin{itemize}
	\item $\alpha$: angle in $[\pi/3,\pi/2]$ that satisfies $\V_d(\alpha) = \frac{1}{N^{1-c}}$.
	\item $\theta^*_\alpha = 2\arcsin(\frac{1}{2\sin(\alpha)})$.
	\item $\beta$: angle in $[\pi/3,\pi/2]$ that satisfies $\V_{d}(\beta) = \frac{1}{N^{c_1 - c_2}}$.
	\item ${\rho_0}$: non-negative real number such that $N^{\rho_0} = \frac{\V_d(\beta)}{\W_d(\beta,\theta^*_\alpha)}$.
	\item $\zeta$: real number such that $N^{\zeta} = N^{2c}\V_{d}(\theta^*_\alpha)$.
\end{itemize}

Plugging the value of $\FAS_1$ from the end of Section \ref{Section:QwalkTimeAnalysis} in Proposition \ref{Proposition:AlgorithmGeneral}, we find that the total running time of our quantum sieving algorithm with parameters $c,c_1,c_2$ is

$$ T = N^{c - \zeta}\left(N + N^{1-c}\max\{N^{\zeta},1\}\left(N^{c_1 + {\rho_0}} + \frac{1}{\max\{1,N^{c_1}\sqrt{\V_d(\theta^*_\alpha)}\}}\left(N^{\max\{{\rho_0},\frac{{\rho_0} + c_2}{2}\} + \frac{c_1}{2}}\right)\right)\right).$$

We ran a numerical optimization over $c,c_1,c_2$ to get our optimal running time, summed up in the following theorem. 

\begin{proposition}\label{previous}
	Our algorithm with parameters 
	$$c \approx 0.3300 \quad ; \quad c_1 \approx 0.1952 \quad ; \quad c_2 \approx 0.0603$$
	heuristically solves SVP on dimension $d$ in time $T = N^{1.2555 + o(1)} = 2^{0.2605d + o(d)}$, uses QRAMM of maximum size $N^{0.3300 + o(1)} = 2^{0.0685d + o(d)}$, a quantum memory of size $N^{0.2555 + o(1)} = 2^{0.0530d + o(d)}$ and uses a classical memory of size $N^{1 + o(1)} = 2^{0.2075d + o(d)}$.
\end{proposition}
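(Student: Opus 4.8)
The statement is a \emph{verification}: every quantity in the running‑time and memory bounds of Section~\ref{Section:QwalkTimeAnalysis} was written as an explicit function of the three free parameters $c,c_1,c_2$, so the plan is to evaluate those functions at the claimed point $(c,c_1,c_2)\approx(0.3300,0.1952,0.0603)$, check that it lies in the admissible regime, and read off the exponents.

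First I would compute the derived quantities from the defining relations recalled at the start of Section~\ref{Section:Optimal1}. From $\V_d(\alpha)=N^{-(1-c)}$ and Proposition~\ref{volume_cap} one gets $\sin\alpha=(3/4)^{(1-c)/2}$, hence $\alpha\approx1.139$, and then $\theta^*_\alpha=2\arcsin\!\big(\tfrac{1}{2\sin\alpha}\big)\approx1.166$; likewise $\sin\beta=(3/4)^{(c_1-c_2)/2}$ gives $\beta\approx1.375$. Both $\alpha$ and $\beta$ lie in $[\pi/3,\pi/2]$, as the construction of Section~\ref{Section:QuantumWalkDescription} requires. Plugging $\beta$ and $\theta^*_\alpha$ into the formulas for $\V_d$ and $\W_d$ gives $\rho_0=\log_N\!\big(\V_d(\beta)/\W_d(\beta,\theta^*_\alpha)\big)\approx0.0603\ge0$ (so $\C_2$ is well defined and the Chernoff bounds of Section~\ref{Section:QwalkTimeAnalysis} apply) and $\zeta=2c+\log_N\V_d(\theta^*_\alpha)\approx0.0746>0$. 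I would also note $c_1\in[0,c]$ and $c_2\in[0,c_1]$, so the quantum walk is legitimate and Proposition~\ref{Proposition:AlgorithmGeneral} applies with this $\FAS_1$.

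Next I would substitute into the total running time displayed just before the proposition, $T=N^{c-\zeta}\big(N+N^{1-c}\max\{N^\zeta,1\}(N^{c_1+\rho_0}+\cdots)\big)$. At this point $N^{c_1}\sqrt{\V_d(\theta^*_\alpha)}=N^{c_1+\tfrac{1}{2}\log_N\V_d(\theta^*_\alpha)}<1$, so the reciprocal factor $1/\max\{1,N^{c_1}\sqrt{\V_d(\theta^*_\alpha)}\}$ equals $1$; combined with the elementary inequality $\max\{\rho_0,(\rho_0+c_2)/2\}+\tfrac{c_1}{2}\le c_1+\rho_0$ (valid since $c_2\le c_1$ and $\rho_0\ge0$), the quantum‑walk cost $T_1$ reduces to its setup term $N^{c_1+\rho_0+o(1)}$. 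Then $\FAS_1=N^\zeta T_1=N^{\zeta+c_1+\rho_0+o(1)}$, $\FAS=N^{1-c}\FAS_1=N^{1-c+\zeta+c_1+\rho_0+o(1)}$ and $\INIT=N^{1+o(1)}$; one checks numerically that $1-c+\zeta+c_1+\rho_0\approx1$, so the two contributions are balanced and $\INIT+\FAS=N^{1+o(1)}$. Multiplying by $\NBREP=\max\{1,N^{c-\zeta}\}=N^{c-\zeta+o(1)}$ gives $T=N^{1+c-\zeta+o(1)}=N^{1.2555+o(1)}$. Since $N=(\sqrt{4/3})^d=2^{(\log_2(4/3)/2)\,d}=2^{0.20752\,d+o(d)}$, this is $2^{0.2605d+o(d)}$. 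For memory, the analysis of Section~\ref{Section:QwalkTimeAnalysis} uses $N^{c_1+\rho_0}$ quantum registers (and as much QRAM), plus $N^c$ QRAM for access to $L_y$; since here $c>c_1+\rho_0$, the QRAM is $\max\{N^{c_1+\rho_0},N^c\}=N^{c}=N^{0.3300}=2^{0.0685d+o(d)}$, the quantum memory is $N^{c_1+\rho_0}=N^{0.2555}=2^{0.0530d+o(d)}$, and the classical space is $N^{1+o(1)}=2^{0.2075d+o(d)}$, as in the classical sieve.

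Finally, the assertion that $(0.3300,0.1952,0.0603)$ is \emph{optimal} is obtained by a straightforward numerical minimization of the displayed running‑time exponent over the compact region $\{0\le c_2\le c_1\le c\le1\}\cap\{\rho_0\ge0\}$; I would simply report that search. There is no genuine conceptual obstacle; the only care needed is bookkeeping inside the several nested $\max$'s — deciding which branch is active at the optimum ($\NBREP=N^{c-\zeta}$ rather than $1$, $T_1$ dominated by $\S$ rather than by the update/check term, QRAM dominated by $N^c$ rather than $N^{c_1+\rho_0}$) — and confirming that the optimal point really satisfies the side constraints $\alpha,\beta\in[\pi/3,\pi/2]$, $\rho_0\ge0$, $c_2\le c_1\le c$, so that all the statements of Section~\ref{Section:QwalkTimeAnalysis} are in force.
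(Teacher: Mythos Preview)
Your proposal is correct and follows essentially the same verification‑by‑substitution approach the paper takes: compute the derived quantities $\alpha,\theta^*_\alpha,\beta,\rho_0,\zeta$ from $(c,c_1,c_2)$, check the side constraints, plug into the running‑time expression to obtain $T_1=N^{c_1+\rho_0}$ and hence $T=N^{1+c-\zeta}=N^{1.2555+o(1)}$, read off the memory exponents, and defer optimality to a numerical search. The only cosmetic difference is that the paper reaches $T_1=N^{c_1+\rho_0}$ by exhibiting the balance identities $\rho_0=c_2$, $\S=\U/\sqrt{\eps\,\delta}$ and $c_1+\rho_0=c-\zeta$, whereas you upper‑bound the walk contribution by the setup via the inequality $\max\{\rho_0,(\rho_0+c_2)/2\}+c_1/2\le c_1+\rho_0$; both routes yield the same exponent.
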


With these parameters, we obtain the values of the other parameters:
	$$ \alpha \approx 1.1388\text{rad} \approx 65.25\degree; \quad \theta^*_\alpha \approx 1.1661\text{rad} \approx 66.46\degree; \quad \beta \approx 1.3745\text{rad} \approx 78.75\degree $$ $${\rho_0} \approx 0.0603 ; \quad \zeta \approx 0.0745.$$
As well as the quantum walk parameters:
$$
\S = N^{c_1 + {\rho_0}} = N^{0.2555}; \quad \U = N^{{\rho_0}} = N^{0.0603}; \quad \C = 0; \quad \eps = \delta = N^{-c_1} = N^{-0.1952}.$$

The equality ${\rho_0} = c_2$ allows to balance the time of the two operations during the update step.
With these parameters we also obtain $ \S = \U/{\sqrt{\epsilon ~ \delta}} = N^{c_1 + {\rho_0}} = N^{0.2555d}$, which balances the overall time complexity. 

Notice that with these parameters, we can rewrite $T$ as
$$ T = N^{c - \zeta}\left(N + N^{1 - c + \zeta + c_1 + {\rho_0}}\right) = N^{1 + c - \zeta} + N^{1 + c_1 + {\rho_0}}.$$
Also, we have $c_1 + {\rho_0} = c - \zeta$, which equalizes the random walk step with the initialization step. From our previous analysis, the amount of required QRAM is $N^{c}$ and the amount of quantum memory needed is $N^{c_1 + {\rho_0}}$.

\section{Quantum random walk for the $\FINDALLSOLUTIONS$ subroutine: an improved quantum random walk}\label{Section:QuantumWalk2}

We now add a variable $\rho \in (0,\rho_0]$ that will replace the choice of $\rho_0$ above. $\rho_0$ was chosen in order to make sure that if a pair $\yv_i,\yv_j$ exists in a vertex $v$, then it will appear on one of the $J^v(\tv)$ for $\tv \in \C_2$. However, we can relax this and only mark a small fraction of these vertices. This will reduce the fraction of marked vertices, which makes it harder to find a solution, but having a smaller $\rho$ will reduce the running time of our quantum random walk. 

The construction is exactly the same as in the previous section just that we replace $\rho_0$ with $\rho$. This implies that $|C_2| = N^{\rho + c_1 - c_2}$. We can perform the same analysis as above 

\paragraph{Time analysis of this QRW in the regime $\zeta + \rho - \rho_0 > 0$.}
We consider the regime where $\zeta + \rho - \rho_0 > 0$ and $\rho \in (0,\rho_0]$ (in particular $\zeta > 0$, since $\rho_0 > 0$). This regime ensures that even when if we have less marked vertices, then there on average more than one marked vertex, so our algorithm at least finds one solution with a constant probability.

The analysis walk is exactly the same than in Section \ref{Section:QwalkTimeAnalysis}, each repetition of the quantum random walk takes time $T_1$ with 
$$
T_1 = \S + \frac{1}{\sqrt{\eps}}\left(\frac{1}{\sqrt{\delta}} ~ \U + \C\right) $$
with 
\begin{align*} \S & = N^{c_1 + \rho}, \quad \U = N^{\max\{\rho,\frac{\rho + c_2}{2}\} + o(1)}, \quad \C = 1, \\
	\eps & = N^{2c_1} N^{\rho - \rho_0}\V_d(\theta^*_\alpha), \quad \delta = N^{-c1}.\end{align*}

The only thing maybe to develop is the computation of $\eps$. We perform the same analysis as above but with $|C_2| = N^{\rho + c_1 - c_2}$. This means that Equation \ref{Eq:OfLemma1} of Lemma \ref{Lemma:Lemma1} becomes 
  \begin{align*}\Pr[\exists \tv \in \C_2, \ \yv_i,\yv_j \in J^v(\tv)] & = 1 - (1-\W_d(\beta,\theta^*_\alpha))^{|C_2|} \\ &\ge |C_2|\W_d(\beta,\theta^*_\alpha) = N^{\rho - \rho_0}.
\end{align*}
which gives the extra term $N^{\rho - \rho_0}$ in $\eps$. Another issue is that now, we can only extract $N^{\zeta + \rho - \rho_0}$ solutions each time we construct the graph, we have therefore to repeat this procedure to find $\frac{N^{\zeta + \rho - \rho_0}}{2}$ solutions with this graph and then repeat the procedure with a new code $\C_2$. The algorithm becomes \\

\begin{minipage}{0.85\textwidth}
	\cadre{
		\begin{center}\textbf{Algorithm from Section \ref{Section:QuantumWalk2} with parameter $\rho$} \end{center}
		\textbf{while} the total number of solutions found is $< \frac{N^{\zeta}}{2}$: \\
		$\quad$ Pick a random product code $\C_2$. \\
		$\quad \ $\textbf{while} the number of solutions found is $< \frac{N^{\zeta + \rho - \rho_0}}{2}$ with this $\C_2$: \\
		$\quad \quad $ Run our QRW with $\rho$ to find a new solution.
	} 
\end{minipage} \\

With this procedure, we also find $\Theta(N^{\zeta})$ solutions in time $N^{\zeta}T_1$ and $\FAS_1 = N^{\zeta}T_1$ (Recall that we are in the case $\zeta \ge \zeta + \rho - \rho_0 > 0$). Actually, Optimal parameters will be when $c_2 = 0$ and $\rho \rightarrow 0$.




\subsection{Analysis of the above algorithm}
This change implies that some reducing pairs are missed. For the quantum random walk complexity, this only change the probability, denoted $\epsilon$, so that a vertex is marked. 
Indeed, it is equal to the one so that there happens a collision between two vectors through a filter, which is no longer equal to the existence of a reducing pair within the vertex. Indeed, to have a collision, there is the supplementary condition of both vectors of a reducing pair are inserted in the same filter, which is of probability $N^{\rho_0-\rho}$. So we get a higher value of $\epsilon = N^{2c_1} \V_d(\theta_\alpha^*) \cdot N^{\rho_0-\rho}$. 

However, this increasing is compensated by the reducing of the costs of the setup ($N^{c_1+\rho+o(1)}$) and the update ($2 N^{\max\{\rho, \frac{\rho+c_2}{2}\}+o(1)}$). 

A numerical optimisation over $\rho, c, c_1$ and $c_2$ leads to the following theorem. 

\begin{theorem}[Theorem \ref{Theorem:Main} restated]
	Our algorithm with a free $\rho$ with parameters 
	$$\rho \rightarrow 0 \quad ; \quad c \approx 0.3696 \quad ; \quad c_1 \approx 0.2384 \quad ; \quad c_2 = 0$$
	heuristically solves SVP on dimension $d$ in 
	time $T = N^{1.2384 + o(1)} = 2^{0.2570 d + o(d)}$, 
	uses QRAM of maximum size $N^{0.3696} = 2^{0.0767d}$,
	a quantum memory of size $N^{0.2384} = 2^{0.0495d}$ and uses a classical memory of size $N^{1 + o(1)} = 2^{0.2075d + o(d)}$.
\end{theorem}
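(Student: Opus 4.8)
The plan is to prove the theorem by substitution: all the quantities appearing in the running-time and memory bounds of Section~\ref{Section:QuantumWalk2} are explicit functions of the three parameters $c, c_1, \rho$ (with $c_2 = 0$ fixed), so it suffices to recover the derived quantities at the stated values, check that the side conditions of that section's analysis are satisfied, and read off the exponents. Concretely, from $c \approx 0.3696$ one gets $\alpha$ from $\V_d(\alpha) = N^{-(1-c)}$, i.e. $\sin\alpha = (3/4)^{(1-c)/2}$ by Proposition~\ref{volume_cap}, and then $\theta^*_\alpha = 2\arcsin\!\bigl(\tfrac{1}{2\sin\alpha}\bigr)$ by Proposition~\ref{theta_star}; since $c_2 = 0$ one gets $\beta$ from $\V_d(\beta) = N^{-c_1}$, i.e. $\sin\beta = (3/4)^{c_1/2}$; then $\rho_0$ from $N^{\rho_0} = \V_d(\beta)/\W_d(\beta,\theta^*_\alpha)$ and $\zeta$ from $N^{\zeta} = N^{2c}\V_d(\theta^*_\alpha)$. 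One then verifies numerically that $0 \le c_2 \le c_1 \le c \le 1$, that $\alpha,\beta \in [\pi/3,\pi/2]$, that $c \ge c_1 + \rho$ (needed so the QRAM peak is $N^c$), and — so that the $\FINDALLSOLUTIONS$ loop of Section~\ref{Section:QuantumWalk2} operates in its claimed regime — that $\zeta + \rho - \rho_0 > 0$; at the optimum one finds $\zeta - \rho_0$ is a strictly positive constant, so this last condition survives the limit $\rho \to 0^+$.

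Next I would assemble the single-repetition cost $T_1 = \S + \tfrac{1}{\sqrt\eps}\bigl(\tfrac{1}{\sqrt\delta}\,\U + \C\bigr)$ with $\S = N^{c_1+\rho}$, $\U = N^{\max\{\rho,(\rho+c_2)/2\}+o(1)}$, $\C = 1$, $\eps = N^{2c_1+\rho-\rho_0}\V_d(\theta^*_\alpha)$, $\delta = N^{-c_1}$, specialize to $c_2 = 0$ and $\rho \to 0$ (so $\S \to N^{c_1}$, $\U \to N^{o(1)}$), and feed $\FAS_1 = \max\{N^{\zeta},1\}\,T_1$ into Proposition~\ref{Proposition:AlgorithmGeneral}, which yields a total running time of the form $T = N^{1+c-\zeta} + N^{1+c_1+\rho} + (\text{walk contribution})$. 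It then remains to check that at the stated parameters the two nontrivial balancing equations hold (the paper records $c_1 + \rho_0 = c - \zeta$ in the analogous place of Section~\ref{Section:QuantumWalk}, and the balancing $\S = \U/\sqrt{\eps\delta}$ plays the same role here), pinning down $c$ and $c_1$ and making all three contributions coincide at $N^{1.2384+o(1)} = 2^{0.2570d+o(d)}$. The memory claims then follow directly from Section~\ref{Section:QuantumWalk2}: the classical store of $L_y$ together with the $\alpha$-filter buckets is $N^{1+o(1)} = 2^{0.2075d+o(d)}$; each vertex of the walk holds $N^{c_1+\rho}$ registers in superposition with a matching amount of QRAM, giving quantum memory $N^{c_1} = 2^{0.0495d}$; and QRAM access to the classically stored list $L_y$ adds $N^c = 2^{0.0767d}$, so the QRAM peak is $\max\{N^c, N^{c_1+\rho}\} = N^c$.

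The main obstacle I expect is not the arithmetic but making the $c_2 = 0$ and $\rho \to 0$ limits rigorous, since the walk analysis of Sections~\ref{Section:QuantumWalk}--\ref{Section:QuantumWalk2} was written for strictly positive $c_2$ and $\rho$. When $c_2 = 0$ one has $N^{c_2} = 1$, so the Chernoff estimate $\Pr[|J^v(\tv)| \ge 2N^{c_2}] \le e^{-N^{c_2}/3}$ used to bound $\Pr[E_2]$ becomes a constant and the union bound $|C_2|\,e^{-N^{c_2}/3}$ no longer vanishes. The clean fix is to replace the threshold $2N^{c_2}$ in the definition of $\widetilde{J}^v(\tv)$ by $d^2 N^{c_2}$ (mirroring the $d^2+1$ threshold used for $|S \cap \hh_{\vv,\alpha}|$ in the probabilistic section): this is an $N^{o(1)}$ change that alters no exponent, but it restores $\Pr[E_2] \ge 1 - |C_2|\,e^{-\Omega(d^4)} = 1 - o(1)$ while keeping the worst-case update bound intact. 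Likewise, ``$\rho \to 0$'' must be read as: for every $\eps' > 0$, running the algorithm with a small fixed $\rho = \rho(\eps') > 0$ still satisfying $\zeta + \rho - \rho_0 > 0$ produces a running-time exponent within $\eps'$ of $0.2570$, which is exactly what the bound $2^{0.2570d + o(d)}$ asserts. Once these limiting points are dispatched, the statement reduces to the numerical verification described above.
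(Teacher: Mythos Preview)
Your proposal is correct and follows essentially the same approach as the paper: plug the stated parameters into the running-time and memory formulas of Sections~\ref{Section:QuantumWalk}--\ref{Section:QuantumWalk2}, verify the regime condition $\zeta + \rho - \rho_0 > 0$ and the balancing $c - \zeta = c_1 + \rho$ (together with $\S = \U/\sqrt{\eps\delta}$), and read off the exponents. You are in fact more careful than the paper about the boundary cases $c_2 = 0$ and $\rho \to 0$ --- the paper simply reports the limiting numerical values without addressing the Chernoff-bound degradation at $c_2 = 0$ that you correctly flag and patch.
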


 With these parameters, we obtain the values of the other parameters:
$$\alpha \approx 1.1514 \text{ rad}; \quad \theta^*_\alpha \approx 1.1586 \text{ rad}; \quad \beta \approx 1.1112 \text{ rad}; \quad \zeta \approx 0.1313.$$
As well as the quantum walk parameters:
$$
\S = N^{c_1 + \rho} = N^{0.2384}; \quad \U = N^{\rho} = N^{o(1)}; \quad \C = 0; \quad \eps = \delta = N^{-c_1} = N^{-0.2384}.$$ 

With these parameters, we also have $\rho_0 = 0.107$ so we are in the regime where $\zeta + \rho - \rho_0 > 0$. As in the previous time complexity stated in Theorem \ref{previous}, we reach the equality $S = U/\sqrt{\epsilon \delta}$, which allows to balance the time of the two steps of the quantum random walk: the setup and the search itself. 

Notice that with these parameters, we can rewrite $T$ as 
$$ T = N^{c - \zeta}\left(N + N^{1 - c + \zeta + c_1 + \rho}\right) = N^{1 + c - \zeta} + N^{1 + c_1 + \rho}.$$ 
With our optimal parameters, we have $\rho = 0$ and $c - \zeta = c_1$, which equalizes the random walk step with the initialization step. 
From our previous analysis, the amount of required QRAM is $N^{c}$ and the amount of quantum memory needed is $N^{c_1}$. 

\section{Space-time trade-offs} 

By varying the values $c, c_1, c_2$ and $\rho$, we can obtain trade-offs between QRAM and time, and between quantum memory and time. 
All the following results come from numerical observations.

\subsection{Trade-off for fixed quantum memory.}

We computed the minimized time if we add the constraint that the quantum memory must not exceed $2^{Md}$. 
For a chosen fixed $M$, the quantum memory is denoted is $2^{\mu_M d} = 2^{Md}$ and the corresponding minimal time by $2^{\tau_M d}$. 
The variation of $M$ also impacts the required QRAM to run the algorithm, that we denote by $2^{\gamma_M d}$.

So we get a trade-off between time and quantum memory in Figure \ref{fig:3time_qspace}, and the evolution of QRAM in function of $M$ for a minimal time is in Figure \ref{fig:3QRACM_qspace}.

\begin{figure}[H]
\centering
\includegraphics[scale = 0.6]{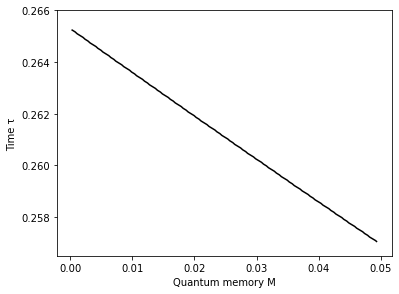}
\caption{Quantum memory-time trade-off.} 
\label{fig:3time_qspace} 
\end{figure} 

\begin{figure}[H]
\centering
\includegraphics[scale = 0.6]{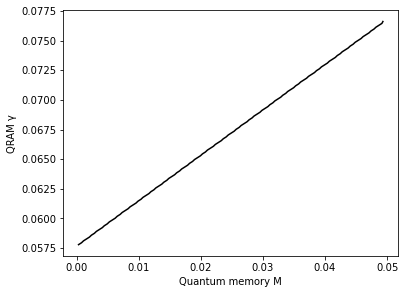}
\caption{QRAM in function of available quantum memory for minimized time.}
\label{fig:3QRACM_qspace} 
\end{figure} 

For more than $2^{0.0495d}$ quantum memory, increasing it does not improve the time complexity anymore. An important fact is that for a fixed $M$ the corresponding value $\tau_M$ from figure $\ref{fig:3time_qspace}$ and $\gamma_M$ from Figure \ref{fig:3QRACM_qspace} can be achieved simultaneously with the same algorithm.

We observe that from $M=0$ to $0.0495$ these curves are very close to affine. 
Indeed, the function that passes through the two extremities points is of expression $0.2653 - 0.1670M$. 
The difference between $\tau_M$ and its affine approximation does not exceed $4\scdot10^{-5}$. 
By the same way, the difference between $\gamma_M$ and its affine average function of expression $0.0578 + 0.3829M$ is inferior to $2 \scdot 10^{-4}$. 
All this is summarized in the following theorem. 

\begin{theorem}[Trade-off for fixed quantum memory]
	There exists a quantum algorithm using quantum random walks that solves SVP on dimension $d$ which for a parameter $M \in [0,0.0495]$ heuristically runs in time $2^{\tau_M d + o(d)}$, uses QRAM of maximum size $2^{\gamma_M d}$,  a quantum memory of size $2^{\mu_M d}$ and a classical memory of size $2^{0.2075d}$ where
	$$ \tau_M \in 0.2653 - 0.1670M + [-2\scdot10^{-5} ; 4\scdot10^{-5}]$$ 
	$$\gamma_M \in 0.0578 + 0.3829M - [0 ; 2\scdot10^{-4}] \quad ; \quad \mu_M = M.$$
\end{theorem}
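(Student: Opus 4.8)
The plan is to reduce the statement to a constrained numerical optimization of the running-time formula obtained in Section~\ref{Section:QuantumWalk2}, followed by a verification that the optimal exponents track the two claimed affine functions. \textbf{Step 1: assemble the objective.} Combining Proposition~\ref{Proposition:AlgorithmGeneral} with the per-repetition cost $T_1 = \S + \frac{1}{\sqrt{\eps}}\big(\frac{1}{\sqrt{\delta}}\,\U + \C\big)$ and $\FAS_1 = \max\{N^\zeta,1\}\cdot T_1$ of the improved walk, the total time is $T = N^{t(c,c_1,c_2,\rho)+o(1)}$ for an explicit, piecewise-smooth exponent $t$: here $\S = N^{c_1+\rho}$, $\U = N^{\max\{\rho,(\rho+c_2)/2\}}$, $\C=1$, $\delta = N^{-c_1}$, and $\eps = N^{2c_1}N^{\rho-\rho_0}\V_d(\theta^*_\alpha)$, while $\alpha,\theta^*_\alpha,\beta,\rho_0,\zeta$ are explicit elementary functions of $(c,c_1,c_2,\rho)$ through Proposition~\ref{volume_cap}, the wedge-volume estimate of~\cite{BDGL16}, Proposition~\ref{theta_star} and Corollary~\ref{delta}. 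Using $\log_2 N = \tfrac{d}{2}\log_2\tfrac43 \approx 0.2075\,d$, the quantum-memory exponent is $\mu = 0.2075\,(c_1+\rho)$, the QRAM exponent is $\gamma = 0.2075\,c$, and the classical-memory exponent is $0.2075$, all independent of the $o(d)$ terms.

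\textbf{Step 2: set up and solve the constrained program.} For fixed $M$, impose $\mu = M$, i.e. $0.2075\,(c_1+\rho) = M$, restrict to the regime $\zeta+\rho-\rho_0>0$, $0\le\rho\le\rho_0$, $0\le c_2\le c_1\le c\le 1$ in which the formula for $t$ is valid, and minimize $t$ over this set. The feasible region is closed and bounded and $t$ is continuous, so a minimizer exists; call the value $\tau_M$ and read off $\gamma_M = 0.2075\,c$ at the minimizer. I would then run this optimization on a fine grid of $M\in[0,0.0495]$, checking in particular that the endpoints reproduce the known data: $M=0$ gives $\tau_0 = 0.2653$ and $\gamma_0 \approx 0.0578$ (Laarhoven's regime, $c_1,\rho\to0$), and $M = 0.0495$ gives Theorem~\ref{Theorem:Main}, $\tau_{0.0495} = 0.2570$, $\gamma_{0.0495}\approx0.0767$.

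\textbf{Step 3: the affine certification.} Form the linear interpolants $\ell_\tau(M) = 0.2653 - 0.1670M$ and $\ell_\gamma(M) = 0.0578 + 0.3829M$ through these endpoints, and tabulate $\tau_M - \ell_\tau(M)$ and $\gamma_M - \ell_\gamma(M)$ over the grid, finding that the first stays in $[-2\cdot10^{-5},\,4\cdot10^{-5}]$ and the second in $[-2\cdot10^{-4},\,0]$. To pass from the grid to a statement for all $M$ in the interval, invoke mild Lipschitz continuity of $M\mapsto\tau_M$ and $M\mapsto\gamma_M$ (the optimal-value function of a smooth program over a compact set is continuous, and one can bound its modulus of continuity from the explicit $t$), and take the grid spacing small enough that the interpolation error between consecutive nodes is absorbed into the stated tolerances; this yields the two displayed inclusions, together with $\mu_M = M$ by construction.

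\textbf{Main obstacle.} The difficulty is that there is no closed-form extremum: the KKT conditions for minimizing $t$ against the linear memory constraint do not simplify, and the balancing identities $\S = \U/\sqrt{\eps\delta}$ and $c_1+\rho = c-\zeta$ that characterise the unconstrained optimum of Theorem~\ref{Theorem:Main} need not persist along the constrained frontier. Hence the affine behaviour is an empirical feature of the numerics rather than something provable analytically, and the real work of the proof is (i) careful assembly and domain-of-validity checking of the formula for $t$, and (ii) a rigorous-enough numerical certificate — e.g. via interval arithmetic or an explicit Lipschitz-plus-grid bound — that the optimum stays within the claimed error bars of the two affine functions.
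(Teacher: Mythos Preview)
Your proposal is correct and follows essentially the same route as the paper: fix the quantum-memory budget, numerically minimize the running-time exponent from Section~\ref{Section:QuantumWalk2} over the remaining parameters, and then observe that the resulting curves $\tau_M$ and $\gamma_M$ stay within the stated tolerances of the affine interpolants through the endpoints $M=0$ and $M=0.0495$. The paper in fact does less than you do---it simply reports the numerical optimization and the observed maximal deviations without any Lipschitz or interval-arithmetic certification---so your Step~3 is, if anything, a strengthening of what the paper actually argues.
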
 

In the informal formulation of this theorem, we used the symbols $\lessapprox$ and $\gtrapprox$ that refers to these hidden small values. 

\subsection{Trade-off for fixed QRAM.} 

We also get a trade-off between QRAM and time. 
For a chosen fixed $M'$, the QRAM is denoted by $2^{\gamma_{M'} d} = 2^{M'd}$, and the corresponding minimal time by $2^{\tau_{M'} d}$. 
The required quantum memory is denoted $2^{\mu_{M'} d}$. Note that $2^{\mu_{M'} d}$ is the also the amount of the required quantum QRAM called "QRAQM".

This gives a trade-off between time and QRAM in the figure \ref{fig:4time_QRACM}, and the evolution of quantum memory in function of $M'$ is in the figure \ref{fig:4qspace_QRACM}. 

\begin{figure}[!ht]
\centering
\includegraphics[scale=0.6]{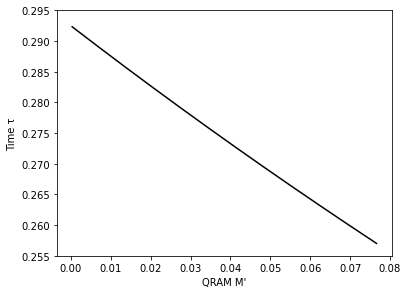}
\caption{QRAM-time trade-off.} 
\label{fig:4time_QRACM} 
\end{figure} 

\begin{figure}[!ht]
\centering
\includegraphics[scale=0.6]{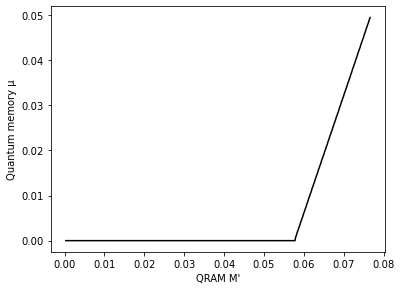}
\caption{Quantum memory in function of available QRAM for minimized time.} 
\label{fig:4qspace_QRACM} 
\end{figure} 

For more than $2^{0.0767d}$ QRAM, increasing it does not improve the time complexity. 

The difference between the function $\tau_{M'}$ and its average affine function of expression $0.2926-0.4647 \scdot M'$ does not exceed $6 \scdot 10^{-4}$. This affine function is a upper bound of $\tau_{M'}$. 

From $M'=0$ to $0.0579$ the function $\gamma_{M'}$ is at $0$. 
Then, it is close to the affine function of expression $2.6356(M'-0.0579)$. 
So $\gamma_{M'}$ can be approximated by $\max\{2.6356(M'-0.0579), 0\}$, and the difference between $\gamma_{M'}$ and this approximation does not exceed $9 \scdot 10^{-4}$. 
All this is summarized in the following theorem. 

\begin{theorem}[Trade-off for fixed QRAM]
	There exists a quantum algorithm using quantum random walks that solves SVP on dimension $d$ which for a parameter $M' \in [0,0.0767]$ heuristically runs in time $2^{\tau_{M'} d + o(d)}$, uses QRAM of maximum size $\poly(d) \cdot 2^{\gamma_{M'} d}$, a quantum memory of size $\poly(d) \cdot 2^{\mu_{M'} d}$ and uses a classical memory of size $\poly(d) \cdot 2^{0.2075d}$ where 
	$$ \tau_{M'} \in 0.2927 - 0.4647M' - [0; 6 \scdot 10^{-4}] \quad ; \quad \gamma_{M'} = M'$$
	$$\mu_{M'} \in \max\{2.6356(M'-0.0579),0\} + [0 ; 9\scdot10^{-4}].$$ 
\end{theorem}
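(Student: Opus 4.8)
The plan is to reduce the statement to the closed-form running time already established for the improved quantum random walk, together with a single bounded numerical optimization, and then to certify the two claimed affine approximations.

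Combining Proposition~\ref{Proposition:AlgorithmGeneral} with the quantum-walk cost analysed in Section~\ref{Section:QuantumWalk2}, the algorithm with parameters $c\in[0,1]$, $0\le c_2\le c_1\le c$ and $\rho\in(0,\rho_0]$ runs in time $T=\NBREP\scdot(\INIT+\FAS)$ with $\NBREP=\max\{1,N^{c-\zeta}\}$, $\INIT=N^{1+o(1)}$, $\FAS=N^{1-c}\max\{N^\zeta,1\}\scdot T_1$, and $T_1=\S+\frac{1}{\sqrt{\eps}}\left(\frac{1}{\sqrt{\delta}}\U+\C\right)$, where
$$\S=N^{c_1+\rho},\ \ \U=N^{\max\{\rho,\frac{\rho+c_2}{2}\}},\ \ \C=1,\ \ \eps=\min\{1,N^{2c_1}N^{\rho-\rho_0}\V_d(\theta^*_\alpha)\},\ \ \delta=N^{-c_1},$$
and $\alpha,\theta^*_\alpha,\beta,\rho_0,\zeta$ are the derived quantities listed at the start of Section~\ref{Section:Optimal1}; moreover the algorithm uses QRAM of size $N^{\max\{c,\,c_1+\rho\}}$, quantum memory $N^{c_1+\rho}$ and classical memory $N^{1+o(1)}=\poly(d)\scdot 2^{0.2075d}$. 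Since $N=(\sqrt{4/3})^d$, an exponent $x$ in base $N$ is an exponent $\tfrac{1}{2}\log_2(4/3)\cdot x\approx 0.2075\,x$ in base $2$ to the precision reported. Imposing the QRAM budget $N^{\max\{c,c_1+\rho\}}=2^{M'd}$, and using that at the optimum $\rho\to0$ and $c_1<c$, fixes $c=M'/0.2075$ and reduces the problem to minimizing $T$ over $(c_1,c_2,\rho)$ in the bounded region $0\le c_2\le c_1\le c$, $0<\rho\le\rho_0$.

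Sweeping $M'$ over $[0,0.0767]$ and solving this optimization numerically produces the curve $M'\mapsto(\tau_{M'},\mu_{M'})$, with $\gamma_{M'}=M'$. Two regimes emerge. For small $M'$ the optimum degenerates: $c_1$ and $\rho$ collapse, the underlying Johnson graph $J(N^c,N^{c_1})$ becomes trivial and the walk reduces to a Grover search (or, at $M'=0$, a purely classical search) inside each filter; this recovers the classical exponent $N^{1.4094}=2^{0.2925d}$ of~\cite{BDGL16} at $M'=0$ with no quantum resources, so $\gamma_{M'}=\mu_{M'}=0$ there, and Laarhoven's quantum exponent around $M'=0.0577$, while $\mu_{M'}=0$ for all $M'$ up to $\approx0.0579$. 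For $M'\gtrsim0.0579$ the walk is genuinely used, $\mu_{M'}$ becomes positive, and at $M'=0.0767$ one recovers the parameters $c\approx0.3696$, $c_1\approx0.2384$, $c_2=0$, $\rho\to0$ of Theorem~\ref{Theorem:Main}. Fitting the two optimal curves by the affine functions in the statement, $\tau_{M'}\approx0.2927-0.4647\,M'$ (taken as an upper bound) and $\mu_{M'}\approx\max\{2.6356(M'-0.0579),0\}$, and bounding the pointwise discrepancies by $6\scdot10^{-4}$ and $9\scdot10^{-4}$ respectively, yields exactly the stated intervals.

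The main obstacle is quantitative rather than conceptual: certifying that the numerically located point is the global minimum over $(c_1,c_2,\rho)$ for every $M'$ in the range, and certifying the error bars of the affine fits. Because $T$ is a maximum of exponentials of explicit smooth functions of the parameters, the objective is piecewise smooth with only finitely many breakpoints — where a $\max$ switches branch ($\rho$ versus $\frac{\rho+c_2}{2}$; $N^{2c_1}N^{\rho-\rho_0}\V_d(\theta^*_\alpha)$ versus $1$; $N^\zeta$ versus $1$; $c$ versus $c_1+\rho$) — so a clean argument enumerates these finitely many regions, minimizes on each (the problem being convex in the log-exponents on each region), and compares the outcomes; alternatively, a sufficiently fine grid sweep in $M'$ together with an explicit Lipschitz bound on $M'\mapsto(\tau_{M'},\mu_{M'})$ certifies the stated intervals. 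Everything else — the derivation of $T$, the QRAM and quantum-memory accounting, and the heuristic correctness of the sieving step — is already in place from the previous sections.
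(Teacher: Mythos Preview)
Your proposal is correct and follows essentially the same approach as the paper: fix the QRAM budget via $c$, numerically minimize the closed-form running time from Proposition~\ref{Proposition:AlgorithmGeneral} and Section~\ref{Section:QuantumWalk2} over the remaining parameters, and then fit the resulting curves $\tau_{M'}$ and $\mu_{M'}$ by the stated affine functions with the reported error bars. If anything, your discussion of how to certify the global optimum (enumerating the piecewise-smooth regions, or a grid sweep with Lipschitz control) is more careful than the paper, which simply reports the numerical outcomes and the observed discrepancies without a formal certification argument.
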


Finally, we present a table with a few values that presents some of the above trade-offs.

\begin{figure}[H] 
    \centering
    \begin{tabular}{ |c|| c|c|c|c|c|c|c|}
    	\hline 
        Time $\tau_{M'}$ & \textbf{0.2925} & 0.2827 & 0.2733 & \textbf{0.2653} & 0.2621 & 0.2598 & \textbf{0.2570} \\
        \hline
        QRAM $\gamma_{M'}$ & \textbf{0} & 0.02 & 0.04 & \textbf{0.0578} & 0.065 & 0.070 & \textbf{0.0767} \\
        \hline
        Q. memory $\mu_{M'}$ & 0 & 0 & 0 & \textbf{0} & 0.0190 & 0.0324 & \textbf{0.0495} \\
        \hline
        Comment & {\cite{BDGL16} alg.}& & & {\cite{Laa16} alg.} & & & Thm 1. \\
        \hline
    \end{tabular}
    \caption{Time, QRAM and quantum memory values for our algorithm.}
    \label{fig:table}
\end{figure}

\section{Discussion}\label{Section:Conclusion}

\paragraph{Impact on lattice-based cryptography.} Going from a running time of $2^{0.2653d + o(d)}$ to $ 2^{0.2570 d + o(d)}$ slightly reduces the security claims based on the analysis of the SVP (usually via the BKZ algorithm). For example, if one claims $128$ bits of security using the above exponent then one must reduce this claim to $124$ bits of quantum security. This of course can usually be fixed with a slight increase of the parameters but cannot be ignored if one wants to have the same security claims as before. 

\paragraph{Parallelization.} On thing we haven't talked about in this article is whether our algorithm paralellizes well. Algorithm \ref{Algorithm:Main} seems to parallelize very well, and we argue that it is indeed the case. 

For this algorithm, the best classical algorithm takes $c \rightarrow 0$. In this case, placing each $\vv \in L$ in its corresponding $\alpha$-filters can be done in parallel and with $N$ processors (or 
$N$ width) it can be done in time $\poly(d)$. Then, there are $N$ separate instances of $\FINDALLSOLUTIONS$ which can be also perfectly parallelized and each one also takes time $\poly(d)$ when $c \rightarrow 0$. The \textbf{while} loop is repeated $N^{-\zeta} = N^{0.409d}$ times so the total running time (here depth) is $N^{0.409d + o(d)}$ with a classical circuit of width $N$. Such a result already surpasses the result from $\cite{BDGL16}$ that achieves depth $N^{1/2}$ with a quantum circuit of width $N$ using parallel Grover search. 

In the quantum setting, our algorithm parallelizes also quite well. If we consider our optimal parameters ($c = 0.3696$) with a similar reasoning, our algorithm will parallelize perfectly with $N^{1-c}$ processors (so that there is exactly one for each call to $\FINDALLSOLUTIONS$ {\ie } for the quantum random walk). Unfortunately, after that, we do not know how to parallelize well within the quantum walk. When we consider circuits of width $N$, our optimizations didn't achieve better than a depth of $N^{0.409 + o(d)}$ which is the classical parallelization. This is also the case if we use Grover's algorithm as in \cite{Laa16} for the $\FINDALLSOLUTIONS$ and we use parallel Grover search as in \cite{BDGL16} so best known (classical or quantum) algorithm with lowest depth that uses a circuit of width $N$ is the classical parallel algorithm described above. 

\section*{Acknowledgments and paths for improvements}
The authors want to thank Simon Apers for helpful discussions about quantum random walks, in particular about the fact that there are no better generic algorithms for finding $k$ different marked than to run the whole random walk (including the setup) $O(k)$ times. There could however be a smarter way to do this in our setting which would improve the overall complexity of our algorithm. Another possible improvement would be to embed the local sensitivity property in the graph on which we perform the random walk instead of working on the Johnson graph. 

\newpage
\bibliography{paper}
\bibliographystyle{alpha}
\end{document}